\newcommand{\TCCEDVS}{{\sc2CCEDVS}}
\newcommand{\TCCVS}{{\sc2CCVS}}
\newcommand{\problemdef}[3]{
\vspace{0.3cm}
\begin{tcolorbox}[enhanced,attach boxed title to top left={yshift=-3.5mm,yshifttext=0mm,xshift=3mm}, colback=gray!3,colframe=black,colbacktitle=white,boxrule=0.5pt,
  title=#1,coltitle=black,fonttitle=\bfseries,
  boxed title style={size=small,colframe=black, boxrule=0.5pt,colback=white} ]
  \vspace{0.3cm}
  \textbf{Given:} \hspace{1.5em} #2 \\[4pt]
  \textbf{Question:} \hspace{0em} #3
\end{tcolorbox}
\vspace{0.3cm}
}
\newtheorem{construction}{Construction}
\begin{document}
\title{On the Complexity of 2-Club Cluster Editing with Vertex Splitting\thanks{This research project was supported by the Lebanese American University under the President’s Intramural Research Fund PIRF0056.}}

\author{Faisal N. Abu-Khzam\inst1
\and Tom Davot\inst2
\and Lucas Isenmann\inst1
\and Sergio Thoumi\inst1}
\authorrunning{Abu-Khzam, Davot, Isenmann and Thoumi}
\institute{
Department of Computer Science and Mathematics\\
Lebanese American University, 
Beirut, Lebanon.\\
\email{\{faisal.abukhzam,lucas.isenmann\}@lau.edu.lb}\\
\email{sergio.thoumi@lau.edu}
\and
LERIA, University of Angers\\
F-49000 Angers, France \\
\email{tom.davot@univ-angers.fr
}
}

\maketitle 

\begin{abstract}

Editing a graph to obtain a disjoint union of $s$-clubs is one of the models for correlation clustering, which seeks a partition of the vertex set of a graph so that elements of each resulting set are close enough according to some given criterion. For example, in the case of editing into $s$-clubs, the criterion is proximity since any pair of vertices (in an $s$-club) is within a distance of $s$ from each other.
In this work, we consider the vertex splitting operation, which allows a vertex to belong to more than one cluster. This operation was studied as one of the parameters associated with the {\sc Cluster Editing} problem. We study the complexity 
and parameterized complexity 
of the {\sc $2$-Club Cluster Edge Deletion with Vertex Splitting} and {\sc $2$-Club Cluster Vertex Splitting} problems. 
We prove that both problems are $\NP$-complete and $\APX$-hard.
On the positive side, we show that the two problems are solvable in polynomial-time on trees and that they are both fixed-parameter tractable with respect to the number of allowed editing operations.

\keywords{Cluster Editing \and 2-Club Cluster Edge Deletion \and 2-Club Cluster Vertex Splitting \and  Vertex Splitting  \and Parameterized Complexity }
\end{abstract}

\section{Introduction}

Correlation clustering is viewed as a graph modification problem where the objective is to perform a sequence of editing operations (or modifications) to obtain a disjoint union of clusters. Many variants of this problem have been studied in the literature, each with a different definition either of what a cluster is or of the various types of allowed modifications. In the {\sc Cluster Editing} problem, for example, a cluster was defined to be a clique and the allowed editing operations were edge additions and deletions \cite{Cai96,KM86,gramm2005graph}. Later, some relaxation models such as $s$-clubs and $s$-clans emerged as they were deemed ideal models for clustering biological networks \cite{balasundaram2005novel,pasupuleti2008detection}. Subsequent efforts studied overlapping clusters in a graph theoretical context \cite{baumes2005finding,fellows2011graph,abu2018cluster}. In this work, we deal with overlapping communities by performing \textit{vertex splitting}, which allows vertices to be cloned and placed in more than one cluster. The {\sc Cluster Editing with Vertex Splitting} (CEVS) problem was first introduced in \cite{abu2018cluster}, but relaxation models are yet to be combined with vertex splitting. 

The {\sc Cluster Editing} and {\sc Cluster Deletion} problems were shown to be $\NP$-complete in \cite{KM86,shamir2004cluster}. Several other variants of the problem have also been proved to be $\NP$-complete. This includes {\sc Cluster Vertex Deletion} \cite{lewis1980node}, {\sc 2-club Cluster Editing} \cite{liu2012editing}, {\sc 2-club Cluster Vertex Deletion} \cite{liu2012editing}, {\sc 2-club Cluster Edge Deletion} \cite{liu2012editing}, {\sc Cluster Vertex Splitting} \cite{firbas2024complexity}, and {\sc Cluster Editing with Vertex Splitting} \cite{abukhzam2023cluster}.
From a parameterized complexity standpoint, {\sc Cluster Editing}, {\sc Cluster (Edge) Deletion}, and {\sc Cluster Vertex Deletion} are known to be fixed-parameter tractable ($\FPT$) \cite{gramm2005graph,huffner2010fixed}.
The same holds for the two club-variants: {\sc 2-club Cluster Edge Deletion} and {\sc 2-club Cluster Vertex Deletion} \cite{liu2012editing}, while {\sc 2-club Cluster Editing} was shown to be $\W[2]$-hard \cite{figiel20212}. 
Furthermore, the {\sc Cluster Editing with Vertex Splitting} problem has also been shown to be $\FPT$ \cite{abukhzam2023cluster}.

From an approximation standpoint, the {\sc Cluster Editing} and {\sc Cluster (Edge) Deletion} problems are $\APX$-hard and have  $O(\log n)$ approximation algorithms \cite{charikar2005clustering}.
On the other hand, {\sc Cluster Vertex Deletion} has a factor-two approximation algorithm \cite{aprile2023tight}.
To the best of our knowledge, problem variants with $s$-clubs or vertex splitting do not have any known approximation results.

The problems mentioned above are all considered different models of correlation clustering. The $s$-club models were shown to be effective in some networks where a clique could not capture all information needed to form a better cluster \cite{balasundaram2005novel,pasupuleti2008detection}. Vertex splitting proved useful, and in fact essential, when the input data has overlapping clusters, such as in protein networks \cite{nepusz2012detecting}. 
So far, vertex splitting has been only used along with cluster editing. In this paper, we introduce the operation to the club-clustering variant by introducing two new problems: {\sc 2-Club Cluster Vertex Splitting} ({\sc 2CCVS}) and {\sc 2-Club Cluster Edge Deletion with Vertex Splitting} ({\sc 2CCEDVS}). These problems seek to modify a graph into a disjoint union of 2-clubs by performing a series of either vertex splitting only 
({\sc 2CCVS}) or vertex splitting and edge deletion operations ({\sc 2CCEDVS}).

\textbf{Our contribution.} 
We prove that {\sc 2CCVS} and {\sc 2CCEDVS} are $\NP$-complete, even when restricted to planar graphs of maximum degree four and three, respectively. Both problems are also shown to be $\FPT$ and solvable in polynomial-time on trees. We also show that, unless $\P = \NP$, the two problems cannot be approximated in polynomial time with a ratio better than a certain constant greater than one.

\section{Preliminaries}

We work with simple, undirected, unweighted graphs, and adopt common graph theoretic terminology. Let $G=(V,E)$ be a graph where $V$ is the set of vertices and $E$ is the set of edges.
A \emph{path} $P_n$ is a sequence of $n+1$ distinct vertices such that $v_i$ is adjacent to $v_{i+1}$ for each $i \in \{1, 2, \ldots, n\}$. The length of a path in a simple unweighted graph is equal to its number of edges. A \emph{geodesic}, or shortest path, between two vertices in a graph is a path that has the smallest number of edges among all paths connecting these vertices.  A \emph{cycle} is a sequence of three or more vertices $v_1, ..., v_n$ forming a path such that $v_n$ is adjacent to $v_1$.

The \emph{distance} between two vertices $u$ and $v$ of $G$, denoted by $d(u,v)$, is the length of a shortest path between them. 
An \emph{$s$-club} is a (sub)graph such that any two vertices are within distance $s$ from each other. Equivalently, the longest path allowed in an $s$-club is a $P_s$. A \emph{clique}, or a 1-club, in $G$ is a subgraph whose vertices are pairwise adjacent. An $s$-club graph is a disjoint union of connected components, such that each component is an $s$-club. 
The \emph{open neighborhood} $N(v)$ of a vertex $v$ is the set of vertices adjacent to it. The \emph{degree} of $v$ is the number of edges incident to $v$, which is $|N(v)|$ since we are considering simple graphs only. A \emph{cut vertex} is a vertex whose deletion disconnects the graph.

A {\em vertex split} is an operation that replaces a vertex $v$ by two copies $v_1$ and $v_2$ such that $N(v) = N(v_1)\cup N(v_2)$. An \emph{exclusive vertex split} requires that $N(v_1) \cap N(v_2) = \emptyset$. In this paper, we do not assume a split is exclusive, but our proofs apply to this restricted version, which is more important in application domains \cite{abu2021}. 
Therefore, all results that we present also hold for exclusive vertex splitting. In this paper, we introduce the following two problems:

\problemdef{$s$-club Cluster Vertex Splitting}{A graph $G=(V,E)$, along with positive integers $s$ and $k$;}{Can we transform $G$ into a disjoint union of $s$-clubs by performing at most $k$ vertex splitting operations?}

\problemdef{$s$-club Cluster Edge Deletion with Vertex Splitting}{A graph $G=(V,E)$, along with positive integers $s$ and $k$;}{Can we transform $G$ into a disjoint union of $s$-clubs by performing a total of at most $k$ edge deletions and/or vertex splitting operations?}

We work with the problem variant where $s$ is fixed to 2.
For a graph $G$, we define $2ccvs(G)$ (resp., $2ccedvs(G)$) as the minimum length of a sequence of splits (resp., edge deletions and splits) to turn $G$ into disjoint union of 2-clubs. 
For example, for a path $P_5 = v_0, \ldots, v_5$ of length 5, we have $2ccvs(P_5) = 2$ by splitting $v_2$ and $v_3$ and, we have $2ccedvs(P_5) = 1$ by deleting the edge $v_2-v_3$.

\section{The Complexity of {\sc 2-Club Cluster Vertex Splitting} and {\sc 2-Club Cluster Edge Deletion with Vertex Splitting}} 
\label{sec:2CCVS}

\subsection{The Complexity of \textsc{2CCVS}}

Since it is obvious that {\sc 2CCVS} is in $\NP$, we show that it is $\NP$-hard by a reduction from 3-SAT \cite{tovey1984simplified}.

\begin{construction} \label{construction:2ccvs}
    Consider a 3-CNF formula $\phi$, and denote by $M$ the number of clauses and by $\mathcal{V}$ the set of variables. For every variable $V$, we denote by $a(V)$ the number of clauses where $V$ appears and by $C(V)_1, \ldots, C(V)_{a(V)}$ the clauses where $V$ appears. Our construction proceeds as follows (see Figure \ref{fig:construction}):
    
    \begin{enumerate}
    \item [-] For each variable $V$, we create a cycle of length $6a(V)$: $v_1, \ldots, v_{6a(V)}$. 
    \item [-] For every clause $C$ where a variable $V$ appears, let $j$ be the index of $C$ in the (above defined) list of the clauses where $V$ appears.
    We define the vertex $v_c$ as $v_{6(j-1)+1}$ (resp., $v_{6(j-1)+2}$) if its corresponding variable $V$ appears positively (resp., negatively). We denote by $v_{c-1}$ (resp., $v_{c+1}$)  the preceding (resp., following) vertex in the variable cycle. 
    Note that we consider the indices modulo $6a(V)$.
    \item [-] For each clause $C=(U \lor V \lor W)$, connect the vertices $u_{C},v_{C},w_{C}$ into a clique.
    \end{enumerate}
\end{construction}

An example is illustrated in Figure \ref{fig:construction}. 
Observe that the obtained graph is of maximum degree $4$.

\begin{figure}[!h]
    \centering

\begin{tikzpicture}[scale=1.4]
  \begin{scope}[xshift=-3cm]
    \foreach \i in {1,...,12} {
      \node[draw, circle] (x\i) at ({60+30*\i}:1) {};
      \node[font=\tiny] at ({60+30*\i}:1.2) {\i};
    }
    \foreach \i in {1,...,12} {
      \pgfmathtruncatemacro{\nextx}{mod(\i, 12) + 1}
      \draw (x\i) -- (x\nextx);
    }
    \node at (0,0) {$X$};
  \end{scope}
  
  \begin{scope}
    \foreach \i in {1,...,12} {
       \node[draw, circle] (y\i) at ({60+30*\i}:1) {};
      \node[font=\tiny] at ({60+30*\i}:1.2) {\i};
    }
    \foreach \i in {1,...,12} {
      \pgfmathtruncatemacro{\nexty}{mod(\i, 12) + 1}
      \draw (y\i) -- (y\nexty);
    }
    \node at (0,0) {$Y$};
  \end{scope}
  
  \begin{scope}[xshift=3cm]
    \foreach \i in {1,...,12} {
      \node[draw, circle] (z\i) at ({60+30*\i}:1) {};
      \node[font=\tiny] at ({60+30*\i}:1.2) {\i};
    }
    \foreach \i in {1,...,12} {
      \pgfmathtruncatemacro{\nextz}{mod(\i, 12) + 1}
      \draw (z\i) -- (z\nextz);
    }
    \node at (0,0) {$Z$};
  \end{scope}
  
  \draw[red, thick, dashed] (x1) to[bend left=25] (y1);
  \draw[red, thick, dashed] (y1) to[bend left=25] (z1);
  \draw[red, thick, dashed] (x1) to[bend left=30] (z1);
  
  \draw[blue, thick, dashed] (x7) to[bend right=25] (y7);
  \draw[blue, thick, dashed] (y7) to[bend right=25] (z8);
  \draw[blue, thick, dashed] (x7) to[bend right=30] (z8);
\end{tikzpicture}

    \caption{The graph $G$ for  $\phi=(X\lor Y \lor Z)\land (X\lor Y \lor \overline{Z})$. The edges of the clause $(X \lor Y \lor Z)$ are in red and the edges of the clause $(X \lor Y \lor \overline{Z})$ are in blue.}
    \label{fig:construction}
\end{figure}

\begin{theorem} \label{th:npc}
{2CCVS} is $\NP$-complete even on graphs with maximum degree four.
\end{theorem}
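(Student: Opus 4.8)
Membership in $\NP$ is immediate: a sequence of at most $k$ splits is a certificate of polynomial size, and one verifies in polynomial time that the resulting graph is a disjoint union of $2$-clubs by computing all pairwise distances inside each connected component. For hardness I would prove that the graph $G$ produced by \autoref{construction:2ccvs} from a $3$-CNF formula $\phi$ satisfies $2ccvs(G)\le k$ if and only if $\phi$ is satisfiable, for the budget $k=9M+2M=11M$. Here $9M=\sum_{V}3a(V)$ is the cost of cutting the variable cycles (using that the number of literal occurrences is $3M$, so $\sum_V a(V)=3M$), while $2M$ accounts for two splits per clause needed by the triangles. Since every literal vertex has exactly two cycle edges and two triangle edges and every other vertex has degree two, $G$ has maximum degree four, as claimed.

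The engine of the argument is the behaviour of a single cycle. First I would record that a cycle $C_n$ with $n\ge 6$ requires exactly $\lceil n/2\rceil$ splits to become a union of $2$-clubs: since the cycle is chordless, the cycle edges lying in any one $2$-club form a subpath of at most two edges, so the $n$ edges must be grouped into at least $\lceil n/2\rceil$ cyclically consecutive arcs whose boundary vertices are split; conversely $\lceil n/2\rceil$ splits suffice. Crucially, because a clause joins three \emph{distinct} variables, the triangle edges never create a shortcut between two vertices of the same cycle, so this bound is robust inside $G$. For $n=6a(V)$ this gives the baseline $3a(V)$, attained by exactly two ``phase'' decompositions into $2$-edge arcs, one making every odd-indexed vertex an arc centre and one making every even-indexed vertex an arc centre. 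I would read these two phases as the truth value of $V$ (odd centres $=$ \emph{true}, even centres $=$ \emph{false}), so that a literal vertex is an arc centre precisely when its literal is satisfied. The key local fact is that a degree-four centre $v_C$ can absorb its entire clause triangle for free: the set consisting of $v_C$, its two cycle neighbours, and the two other triangle vertices $u_C,w_C$ (together with the edge $u_Cw_C$) is a $2$-club centred at $v_C$; thus a true literal can \emph{host} its triangle at no cycle cost.

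For the forward direction I would take a satisfying assignment, cut each variable cycle in the phase dictated by its truth value (cost $\sum_V 3a(V)=9M$), and for each clause choose one true literal and let its vertex host the triangle as above. Hosting is free for the host, but it forces the two remaining triangle vertices to contribute one private copy each to the host's $2$-club; these are the only extra splits, giving exactly two per clause and a total of $9M+2M=11M$. Hence $\phi$ satisfiable implies $2ccvs(G)\le k$.

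The reverse direction is where the real work lies, and it is the step I expect to be the main obstacle. Given a solution of size at most $11M$, I would separate each created copy into a \emph{cycle} copy (carrying a cycle edge) and a \emph{triangle-only} copy, so that the total splits equal the cycle cost plus the triangle cost. The cycle cost is at least $9M$ by the robust per-cycle bound, and the triangle cost is at least $2M$ because at most one vertex of a triangle can carry both of its cycle edges together with its triangle edges (a shared club holding two full cycle edges would contain two cycle neighbours at distance three), so at least two of the three literal vertices need a private triangle copy. Tightness therefore forces both bounds to be met: every cycle is cut by a canonical phase and every clause is paid for with exactly two extra splits via a free host. The delicate part is showing that a free host is possible \emph{only} at a true literal: covering a triangle edge without a private copy requires attaching it to an arc \emph{centre}, and in a canonical phase centres have a fixed parity, so a hosting literal vertex is a centre exactly when its literal is true; an unsatisfied clause has all three literal vertices as arc endpoints, and promoting any of them to a centre (even to a one-edge ``mini-centre'') breaks the canonical decomposition and costs an additional split, so such a clause forces total cost at least $11M+1$. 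Making this case analysis airtight---in particular ruling out global trade-offs in which an over-split cycle ``pays forward'' to save at several clauses---is the crux, which I would settle by a charging argument assigning every split either to a cycle arc or to a clause and showing the minimum is achieved only by canonical phases with free hosting at true literals.
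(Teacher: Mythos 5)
Your proposal is correct and follows essentially the same route as the paper's proof: the same construction and budget $k=11M$, the same forward direction (phase-cutting each cycle with $3a(V)$ splits plus two splits per clause around a "host" literal vertex), and the same reverse-direction accounting ($9M$ for cycles plus $2M$ for clauses, with tightness forcing the two canonical phases, truth values read off the phase, and an unsatisfied clause forcing a third extra split and hence cost at least $11M+1$). The "crux" you defer is settled in the paper exactly by the local, additive charging you sketch (cycle-edge-carrying copies versus triangle-only copies, whose per-cycle and per-clause lower bounds are disjoint), so no global trade-off is possible and your plan closes as intended.
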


\begin{proof}
Let $G$ be the graph obtained by Construction~\ref{construction:2ccvs}.
We set $k=11M$ where $M$ is the number of clauses.

$\textit{Claim}.$ The graph $G$ has a sequence of at most $k$ splits 
that turn $G$ into a disjoint union of $2$-clubs if and only if $\phi$ is satisfiable.

($\Leftarrow$) 
Suppose that $\phi$ is satisfiable and consider a satisfying assignment $\sigma$.
Let $V$ be a variable.
If $V$ is true (resp., false), then we split $v_{2j}$ (resp., $v_{2j+1})$ for every $j \in [1,3a(V)]$ so that it separates the cycles in a disjoint union of paths of length 2.
We recall that the indices of the vertices $v_j$ are considered modulo $6a(V)$ in $[1,6a(V)]$.
For each split vertex, one copy will get all the edges coming from outside of the cycle and the other none. 
Therefore, we have already used $\sum_{i\in |V|} 3a(V_i) = 9M$ splits.

Let $C$ be a clause containing three variables $X$, $Y$ and $Z$.
As $\phi$ is satisfied by $\sigma$, then there exists a variable which satisfies $C$.
Without loss of generality, we can suppose that $X$ satisfies $C$.
If $X$ is true, then the vertices $x_{C-1}$ and $x_{C+1}$ are split.
We split the vertices $y_C$ (resp., $z_C$) so that one copy is adjacent to vertices of the clause and the other copy is adjacent to vertices of the cycle of $Y$ (resp., $Z$).
For each clause, we make $2$ splits.
In total, we use $k=2M + 9M$ splits.

The connected components of the resulting graphs are paths of length 2 (in the variable cycles) and stars centered on $x_C$ vertices where $X$ is the chosen variable satisfying a clause $C$.
We conclude that this sequence of $k$ splits leads to a disjoint union of $2$-clubs.
\newline

($\Rightarrow$)
Conversely, suppose we have a sequence of $k$ splits turning $G$ into a disjoint union of $2$-clubs.
Let $V_i$ be a variable and
suppose that less than $3a(V_i)$ vertices of the cycle are split.
As there are $6a(V_i)$ vertices, then there exists two adjacent vertices $v_i$ and $v_{i+1}$, with $i \in [1, 6a(V_i)]$ which are not split.
Thus, the vertices $v_{i-1}$ and $v_{i+2}$ will be at distance at least $3$ in the resulting graph, a contradiction.
Hence, at least $3a(V_i)$ of the vertices of the cycle must be split.

Let us prove that each clause needs $2$ splits.
Consider a clause $C_j$ with variables $X,Y,$ and $Z$.
In the resulting graph, there is a copy $x_c'$ of $x_c$ and a copy $y_c'$ of $y_c$ such that $x_c'$ and $y_c'$ are adjacent.
Suppose that $x_c'$ is adjacent to $x_{c-1}$ or $x_{c+1}$ and $y_c'$ is adjacent to $y_{c-1}$ or $y_{c+1}$.
This would imply that the resulting graph contains two vertices at distance $3$ from each other, a contradiction.
Therefore, either $x_c'$ is not adjacent to $x_{c-1}$ and $x_{c+1}$ or $y_c'$ is not adjacent to $y_{c-1}$ and $y_{c+1}$.
Without loss of generality, we can suppose that $x_c'$ is not adjacent to $x_{c-1}$ and $x_{c+1}$.
By considering the edge $y-z$, we prove in the same way that either there is a copy $y_c'$ of $y_c$ which is not adjacent to $y_{c-1}$ and $y_{c+1}$ or there is a copy $z_c'$ of $z_c$ which is not adjacent to $z_{c-1}$ and $z_{c+1}$.
So we need at least 2 splits for every clause which cannot be used to solve the conflicts in the cycle.

Therefore, we need at least $ 2M +9M$ splits.
As $k = 11M$, we deduce that for each variable $V$ exactly $3a(V)$ vertices of its cycle are split.
Let us prove that in every variable cycle, the vertices which are split are not adjacent.
Let $V$ be a variable, suppose that two adjacent vertices are split.
Then, without loss of generality, we can suppose that $v_1$ and $v_2$ are split.
We still have to deal with the geodesics $v_1, v_{a(V)},v_{a(V)-1}, v_{a(V)-2}$ and $v_2, v_3,v_4,v_5$.
Thus, we need $3a(V)-1$ more splits.
It contradicts the fact that we can only use $3a(V)$ splits to resolve all the $P_3$s of the cycle since we used two splits previously.
Therefore, the split vertices disconnecting the cycle of $v$ are not adjacent.
As half of the vertices of the cycles must be split, we deduce that there are only 2 ways to split the cycle.

Let us define a truth assignment of the variables.
If $v_2$ is split, then we set $V$ to true; otherwise, we set it to false.

Let us prove that the 3-SAT formula is satisfied by the previous assignment.
Let $C$ be a clause with variables $X,Y,Z$.
Without loss of generality we can suppose that $X,Y,$ and $Z$ appear positively in the clause.
Suppose that $X,Y,$ and $Z$ are set to false.
Therefore, $x_1$ is split and the edges $x_1-x_{6a(X)}$ and $x_1-x_2$ are split.
Therefore, the following paths are in the obtained graph: $x_c-x_{c+1}-x_{c+2}$ and $x_c-x_{c-1}-x_{c-2}$.
As $y_c-x_c-x_{c+1}-x_{c+2}$ is a path and $z_c-x_c-x_{c+1}- x_{c+2}$ is a path as well, we deduce that we have to split $x_1$ one more time.
In the same way, we prove that $y_c$ and $z_c$ must be split at least $2$ times. 
This contradicts the fact that the sequence of splits is of length at most $k = 2M+9M$.
We conclude that $C$ is satisfied and that we have a yes-instance. This completes the proof.
\end{proof}

\subsection{The Complexity of \textsc{2CCEDVS}}

\begin{construction} \label{construction:2ccedvs}
    Consider a 3-CNF formula $\phi$, and denote by $M$ the number of clauses and by $\mathcal{V}$ the set of variables. For every variable $V$, we denote by $a(V)$ the number of clauses where $V$ appears and by $C(V)_1, \ldots, C(V)_{a(V)}$ the clauses where $V$ appears. Our construction proceeds as follows (see Figure \ref{fig:construction_2ccedvs}):
    
    \begin{enumerate}
    \item [-] For each variable $V$, we create a cycle of length $6a(V)$: $v_1, \ldots, v_{6a(V)}$. 
    \item [-] For every clause $C$, we create a vertex $c$.
    For each variable $V$ appearing in $C$, let $j$ be the index of $C$ in the (above defined) list of the clauses where $V$ appears.
    We define the vertex $v_c$ as $v_{6(j-1)+1}$ (resp., $v_{6(j-1)+2}$) if its corresponding variable $V$ appears positively (resp., negatively). We denote by $v_{c-1}$ (resp., $v_{c+1}$)  the preceding (resp., following) vertex in the variable cycle. 
    Note that we consider the indices modulo $6a(V)$.
    \item [-] For each clause $C=(U \lor V \lor W)$, connect the vertices $u_{C},v_{C},w_{C}$ to $c$.
    \end{enumerate}
\end{construction}

\begin{figure}[!h]
    \centering

\begin{tikzpicture}[scale=1.4]
  \begin{scope}[xshift=-3cm]
    \foreach \i in {1,...,12} {
      \node[draw, circle] (x\i) at ({60+30*\i}:1) {};
      \node[font=\tiny] at ({60+30*\i}:1.2) {\i};
    }
    \foreach \i in {1,...,12} {
      \pgfmathtruncatemacro{\nextx}{mod(\i, 12) + 1}
      \draw (x\i) -- (x\nextx);
    }
    \node at (0,0) {$X$};
  \end{scope}
  
  \begin{scope}
    \foreach \i in {1,...,12} {
       \node[draw, circle] (y\i) at ({60+30*\i}:1) {};
      \node[font=\tiny] at ({60+30*\i}:1.2) {\i};
    }
    \foreach \i in {1,...,12} {
      \pgfmathtruncatemacro{\nexty}{mod(\i, 12) + 1}
      \draw (y\i) -- (y\nexty);
    }
    \node at (0,0) {$Y$};
  \end{scope}
  
  \begin{scope}[xshift=3cm]
    \foreach \i in {1,...,12} {
      \node[draw, circle] (z\i) at ({60+30*\i}:1) {};
      \node[font=\tiny] at ({60+30*\i}:1.2) {\i};
    }
    \foreach \i in {1,...,12} {
      \pgfmathtruncatemacro{\nextz}{mod(\i, 12) + 1}
      \draw (z\i) -- (z\nextz);
    }
    \node at (0,0) {$Z$};
  \end{scope}

  \node[draw, circle] (C1) at (0,2) {$C_1$};
   \node[draw, circle] (C2) at (0,-2) {$C_2$};
  
  \draw[red, thick, dashed] (x1) to (C1);
  \draw[red, thick, dashed] (y1) to (C1);
  \draw[red, thick, dashed] (z1) to (C1);
  
  \draw[blue, thick, dashed] (x7) to[bend right = 25] (C2);
  \draw[blue, thick, dashed] (y7) to (C2);
  \draw[blue, thick, dashed] (z8) to[bend left=25] (C2);
\end{tikzpicture}
    
    \caption{Example of a graph obtained by Construction~\ref{construction:2ccedvs} from the 3-CNF $\phi = (X \lor  Y \lor Z )\land (X \lor Y \lor \overline{Z})$.}
    \label{fig:construction_2ccedvs}
\end{figure}

\textit{Proof}. Since its membership in $\NP$ is obvious, we proceed by proving the problem is $\NP$-hard by a reduction from the $\NP$-hard 3-SAT problem \cite{tovey1984simplified}.

Given a 3-SAT instance consisting of a 3-CNF formula $\phi$ where each clause has exactly three distinct variables. 
We consider the graph $G$ obtained by Construction~\ref{construction:2ccedvs}.

$\textit{Claim}.$ The constructed instance of  \TCCEDVS{} is a yes-instance if and only if the given 3-SAT input formula $\phi$ is satisfiable.

($\Rightarrow$) Suppose we have a yes-instance of 3-SAT. Let $V$ be our variable and $G_i$ be its corresponding gadget.
For each $1\leq j \leq 2a(V)$: if $V$ is set to true, we delete the edges $v_{3j+1}-v_{3j+2}$. If $V$ is set to false, we delete the edges $v_{3j}-v_{3j+1}$. 
This transforms $G_i$ into a disjoint union of $P_2$s by using $2a(V)$ modifications. 
In total, we have $18M$ edges in all variable gadgets combined so we use $6M$ modifications to transform all variable gadgets into a disjoint union of $P_2$s.

Obviously, the clause gadget by itself is a 2-club, however, the connections with the variable gadgets pose a conflict. To solve this conflict, we do the following: for each clause, there is at least one true literal (if there is multiple, pick one arbitrarily), we set the edge connecting the clause gadget to the corresponding variable gadget as permanent and delete the other two.
In total, we use $2M$ modifications for the rest of the graph.
So, we use $6M+2M=8M$ modifications for the graph.

($\Leftarrow$) Conversely, suppose we have a yes-instance of 2CCEDVS.

We first show that to transform the given instance into a 2-club graph using at most $8M$ modifications, we would need to apply the same modifications that were performed above.
If we were to solve this conflict via splitting, then the first split would turn the cycle of length $l$ to a path of length $l+1$, but deleting an edge leaves us with a path of length $l$. In general, splitting a vertex $v_i$ would always result in a path that is an edge longer than when we delete the edge that is incident to $v_i$.
Thus, the most optimal solution would be to 
delete every third edge in the cycle. 
Next, we have the conflict between the variable gadgets and clause gadgets. 
We have two options: deleting the two of the edges connecting the clause gadget to the variable gadgets, or splitting the clause vertex twice. Splitting the vertex does not lead to the optimal solution unless all literals in the clause are true (otherwise, we would get a 3-club). Thus, the optimal solution would delete the two edges. 
Since we have a yes-instance and at least $k=8M$ modifications are needed to transform the constructed instance into a disjoint union of 2-clubs, then there is a way to modify the graph such that for each clause, one vertex is part of the structure shown in Figure \ref{struct} . If these vertices appear positively, then we set them to true; otherwise, we set them to false. The remaining variables are arbitrarily set to true or false. This concludes the proof.

\begin{figure}[!htb]
    \centering
\begin{tikzpicture}[main/.style = {draw, circle, node distance=2cm, scale = 0.80, minimum size=2em},
every edge quotes/.style = {auto=left, sloped, font=\scriptsize, inner sep=1pt}
] 
\node[main] (1) {};
\node[main] (2) [right of=1] {};
\node[main] (3) [right of=2] {};
\node[main] (4) [below of=2] {$C_j$};
\draw (1) -- node[above, sloped, pos=0.5] {} (2);
\draw (2) -- node[above, sloped, pos=0.5] {} (3);
\draw (2) -- node[above, sloped, pos=0.5] {} (4);

\end{tikzpicture}
    \caption{Clause gadget (becoming a 2-club) after applying the modifications.}
    \label{struct}
\end{figure}

Our proof also implies the $\NP$-hardness of the {\sc 2-Club Cluster Edge Deletion} problem, simply because our proof does not use vertex splitting (although it is allowed). The problem is already known to be $\NP$-complete \cite{liu2012editing}, but we note that the above yields an alternative proof.

\begin{corollary}
    {\sc 2-Club Cluster Edge Deletion} is $\NP$-complete.
\end{corollary}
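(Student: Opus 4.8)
The plan is to observe that {\sc 2-Club Cluster Edge Deletion} is precisely {\TCCEDVS} with the vertex-splitting operation forbidden, so the very same reduction (Construction~\ref{construction:2ccedvs} with $k = 8M$) serves as a reduction from 3-SAT. Membership in $\NP$ is immediate, since a certificate is a list of at most $k$ edges to delete, and one can verify in polynomial time that their removal yields a disjoint union of $2$-clubs. For hardness, the key remark is that forbidding splits only shrinks the space of feasible solutions; hence any edge-deletion-only solution is in particular a feasible {\TCCEDVS} solution of the same size.

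For the forward direction I would reuse the $(\Rightarrow)$ argument of the {\TCCEDVS} proof verbatim: given a satisfying assignment it deletes $6M$ edges inside the variable cycles and $2M$ edges between clause and variable gadgets, using no splits at all. This is therefore already a valid {\sc 2-Club Cluster Edge Deletion} solution of size $8M$. For the backward direction, suppose $G$ admits at most $8M$ edge deletions turning it into a $2$-club graph. Since this deletion set is also a feasible {\TCCEDVS} solution within the budget $8M$, I can invoke the $(\Leftarrow)$ analysis of the {\TCCEDVS} proof, which extracts a satisfying assignment of $\phi$ from such a budget-respecting solution.

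The only point requiring a (brief) check is that the backward analysis remains valid when splitting is disallowed. This is not a real obstacle but rather a simplification: that analysis already argued that splitting a vertex is strictly more costly than deleting an edge and therefore never appears in a budget-$8M$ solution, so ruling splits out a priori changes nothing in the counting and only makes the case analysis shorter. Combining the two directions shows that $G$ is a yes-instance of {\sc 2-Club Cluster Edge Deletion} with $k = 8M$ if and only if $\phi$ is satisfiable, which completes the reduction and establishes $\NP$-completeness.
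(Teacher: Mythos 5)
Your proposal is correct and takes essentially the same approach as the paper: the paper derives this corollary by observing that its \TCCEDVS{} reduction never uses vertex splitting in the forward direction, while any edge-deletion-only solution is in particular a valid \TCCEDVS{} solution within the same budget, so the backward analysis applies a fortiori. Your write-up simply makes these two implications explicit, with no gap.
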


\begin{theorem} \label{theorem:2CCEDVS_NPC}
The {\sc 2-Club Cluster Edge Deletion with Vertex Splitting} problem is $\NP$-complete.
\end{theorem}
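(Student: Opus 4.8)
The plan is to establish $\NP$-completeness by pairing an easy membership argument with a polynomial-time reduction from 3-SAT built on Construction~\ref{construction:2ccedvs}. Membership in $\NP$ is immediate: a sequence of at most $k$ edge deletions and vertex splits serves as a certificate, and checking that the resulting graph is a disjoint union of $2$-clubs (for instance, by computing all pairwise distances inside each connected component) takes polynomial time. For hardness, given a 3-CNF formula $\phi$ with $M$ clauses, I would take the graph $G$ produced by Construction~\ref{construction:2ccedvs}, set the budget to $k = 8M$, and prove that $G$ admits a sequence of at most $k$ modifications turning it into a $2$-club graph if and only if $\phi$ is satisfiable.

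For the forward direction I would fix a satisfying assignment $\sigma$ and exhibit an explicit solution using exactly $8M$ deletions. In each variable cycle I delete every third edge, choosing the offset according to whether the variable is true or false; since the cycle for variable $V$ has $6a(V)$ edges this costs $2a(V)$ deletions, and because the cycles contain $18M$ edges in total (as $\sum_V a(V) = 3M$), this amounts to $6M$ deletions that break every cycle into a disjoint union of $P_2$'s. Then, for each clause $C$, I pick a literal satisfied by $\sigma$, keep the edge joining $c$ to the corresponding cycle vertex, and delete the other two clause edges, costing $2M$ deletions. It then remains to verify that each surviving component --- the short paths coming from the cycles and the small star centered at each clause vertex $c$ (Figure~\ref{struct}) --- is a $2$-club.

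The converse is where the real work lies. Starting from any solution of cost at most $8M$, I first argue that edge deletion dominates vertex splitting for the purpose of cutting the cycles: splitting a cycle vertex produces a path that is one edge longer than the path obtained by deleting an incident edge, so no optimal solution benefits from splitting inside a cycle. A counting argument then shows that each variable cycle needs at least $2a(V)$ modifications, since destroying every distance-$3$ pair forces the cycle of length $6a(V)$ to be cut into arcs of at most two edges, requiring at least $6a(V)/3 = 2a(V)$ cuts; summed over all variables this is $6M$. Likewise, each clause contributes an unavoidable conflict among its three cycle attachments, costing at least $2$ further modifications and giving $2M$ in total. Since the budget $8M$ is exactly the sum of these lower bounds, every inequality must be tight, which pins down the solution: each cycle is cut in one of exactly two periodic ways, and exactly one clause edge survives per clause (splitting $c$ twice is strictly worse unless all three literals are true, as it would create a $3$-club). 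Reading the retained clause edge as the chosen satisfied literal then yields a consistent truth assignment for $\phi$.

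The main obstacle is precisely this tightness analysis. I must rule out ``mixed'' solutions that reallocate the budget between the cycle gadgets and the clause gadgets, or that combine deletions with splits in a way that defeats the clean $6M + 2M$ accounting; and I must confirm that the forced periodic cut of each cycle is compatible with a single global truth value for the variable, so that the literals read off from the surviving clause edges are consistent across all clauses in which a variable appears. Establishing that the two admissible cut patterns of a cycle correspond exactly to the two truth values, and that the surviving clause attachment always lands on the side of the cycle made available by that truth value, is the crux that makes the extracted assignment well defined and satisfying.
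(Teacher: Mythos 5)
Your proposal follows essentially the same route as the paper's own proof: the same Construction~\ref{construction:2ccedvs}, the same budget $k=8M$ decomposed as $6M$ cycle-edge deletions plus $2M$ clause-edge deletions, the same forward direction, and the same converse argument (deletions dominate splits inside the cycles, splitting the clause vertex is never better than deleting two clause edges, and the tight budget forces the periodic cycle cuts and exactly one surviving clause edge per clause, from which the assignment is read off). The tightness and consistency analysis you flag as the remaining crux is treated at the same informal level in the paper itself, so your plan is faithful to, and no less rigorous than, the published argument.
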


\subsection{\textsc{2CCED} and \textsc{2CCEDVS} on Planar Graphs}

As the two constructions result in planar graphs if the bipartite graph of the 3-SAT instance is planar, we deduce the following from  the $\NP$-hardness of {\sc Planar-3-SAT} \cite{lichtenstein1982planar}:

\begin{corollary}
\label{cor2}
{\sc 2-club Cluster Vertex Splitting} remains $\NP$-complete on planar graphs with maximum degree four.
\end{corollary}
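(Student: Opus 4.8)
The plan is to reuse the reduction of Theorem~\ref{th:npc} essentially verbatim, but to feed it a \emph{Planar-3-SAT} instance instead of an arbitrary one, and then verify that Construction~\ref{construction:2ccvs} outputs a planar graph in this restricted setting. Recall that Planar-3-SAT, i.e.\ 3-SAT restricted to formulas whose variable--clause incidence (bipartite) graph $H$ is planar, is $\NP$-hard~\cite{lichtenstein1982planar}. Since the correctness of the reduction (a yes-instance of 3-SAT corresponds to at most $k=11M$ splits) and the degree-four bound are already established in and after Theorem~\ref{th:npc}, the only genuinely new thing to check is that the graph $G$ produced by Construction~\ref{construction:2ccvs} is planar whenever $H$ is. The reduction is still polynomial, so this suffices.

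First I would fix a planar embedding of $H$, in which every variable $V$ is a vertex of degree $a(V)$ and every clause is a vertex of degree $3$, and then \emph{inflate} this embedding. For a variable $V$, I draw the cycle $v_1,\ldots,v_{6a(V)}$ inside a small disk placed where the vertex $V$ sat. In the embedding the $a(V)$ edges incident to $V$ leave in some cyclic order; I place the $a(V)$ appearance vertices $v_c$ on the cycle in exactly that cyclic order. The length $6a(V)$ provides six consecutive cycle vertices per appearance, which is ample room to let each clause connection leave the disk without crossing either the cycle or a neighbouring connection.

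For a clause $C=(U\lor V\lor W)$, the corresponding vertex of $H$ has degree three, and its three edges leave toward the three variable disks in a fixed cyclic order. Within the small region that held the clause vertex I place $u_C,v_C,w_C$ in that same cyclic order and draw the triangle on them; a triangle on three vertices arranged cyclically is planar, and each of its three legs is routed along the track of the original incidence edge of $H$ to the appropriate disk, so no two legs cross. Assembling the pieces gives a planar drawing of $G$: the variable disks are pairwise disjoint, the clause regions are pairwise disjoint, and every added edge lies in a thin tube around an edge of $H$. As already observed, the maximum degree stays four, since a cycle vertex has degree two and an appearance vertex $v_c$ only gains the two triangle edges to its clause partners. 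Combined with the equivalence proved in Theorem~\ref{th:npc}, this yields a polynomial-time reduction from Planar-3-SAT to \TCCVS{} on planar graphs of maximum degree four, establishing $\NP$-completeness.

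I expect the main obstacle to be the planarity bookkeeping in the variable-inflation step: one must argue carefully that the appearance vertices can always be realized on the cycle in the cyclic order inherited from the embedding of $H$, and that the six-vertex spacing per appearance is enough to separate the outgoing connections so that the routing of the clause legs remains crossing-free. Everything else—membership in $\NP$, correctness of the equivalence, and the degree bound—is inherited directly from Theorem~\ref{th:npc} and needs no reproof.
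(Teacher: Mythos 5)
Your proposal is correct and follows essentially the same route as the paper: both reduce from Planar-3-SAT via Construction~\ref{construction:2ccvs} and argue that the output is planar by inflating the planar incidence graph (the paper phrases this as replacing variable vertices by cycles, clause vertices by triangles, and merging, each operation preserving planarity, which is exactly your disk-and-tube embedding argument made abstract). The only cosmetic slip is that you describe $u_C,v_C,w_C$ both as lying on their variable cycles and as placed in the clause region; either placement yields a topologically equivalent planar drawing, so the argument stands.
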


\begin{corollary}
\label{cor3}
 {\sc 2-club Cluster Edge Deletion with Vertex Splitting} remains $\NP$-complete on planar graphs with maximum degree three.
\end{corollary}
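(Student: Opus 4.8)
The plan is to reuse Construction~\ref{construction:2ccedvs} verbatim, but to apply it to an instance of \textsc{Planar-3-SAT} rather than to general 3-SAT, and then to argue that the two structural restrictions in the statement --- planarity and maximum degree three --- are automatically met by the resulting graph $G$. The logical equivalence ``$G$ is a yes-instance of \TCCEDVS{} iff $\phi$ is satisfiable'' is exactly what the proof of Theorem~\ref{theorem:2CCEDVS_NPC} already establishes using this same construction, and membership in $\NP$ is immediate; so nothing about the reduction's correctness needs to be redone. What remains is purely to certify that Construction~\ref{construction:2ccedvs} never leaves the class of planar, degree-three graphs when its input has a planar incidence graph. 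Combined with the $\NP$-hardness of \textsc{Planar-3-SAT} \cite{lichtenstein1982planar}, this yields the corollary.

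The degree bound is a direct count. Every cycle vertex has degree exactly two within its own variable cycle, and it receives at most one further edge, namely when it is the designated literal vertex $v_C$ of some clause $C$; the blocks of six consecutive positions reserved per clause guarantee that each cycle vertex is the attachment point of at most one clause, so cycle vertices have degree at most three. Each clause vertex $c$ is joined only to the three literal vertices $u_C, v_C, w_C$, hence also has degree three. Therefore $\Delta(G) \le 3$.

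For planarity I would start from a fixed planar embedding of the bipartite incidence graph $H$ of $\phi$, whose nodes are the variables and the clauses and whose edges join each variable to every clause containing it. This embedding fixes, around each variable node $V$, a cyclic rotation order of its $a(V)$ incident edges. The key observation is that the labeling $C(V)_1, \ldots, C(V)_{a(V)}$ of the clauses in which $V$ appears is entirely free in Construction~\ref{construction:2ccedvs}: the construction only fixes that the $j$-th clause is served by an attachment vertex inside the $j$-th block of six consecutive cycle positions. I would therefore choose this labeling so that the attachment blocks occur around the variable cycle in the same cyclic order in which the corresponding edges occur around $V$ in the embedding of $H$.

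With that choice the planar drawing of $G$ is obtained by a local expansion of each variable node: replace the node $V$ by a small disk carrying its cycle of length $6a(V)$, placing the (at most one) attachment vertex of each clause block on the boundary of the disk in the prescribed cyclic order, and leaving all interior cycle edges inside the disk; then draw each clause vertex $c$ in the face of $H$ where its node sat, and route every clause-to-literal edge of $G$ along the corresponding variable-clause edge of $H$. Because the attachment vertices leave each disk in exactly the rotation order of the original edges, none of the rerouted edges cross, and the result is a planar embedding of $G$. The main obstacle, and the only nontrivial step, is precisely this compatibility argument: one must check that the freedom to permute the clause labels around each cycle suffices to realize the embedding's rotation system \emph{simultaneously} at every variable, and that the six-position spacing keeps attachment points of different clauses from interfering. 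Everything else is bookkeeping, and the identical argument applied to Construction~\ref{construction:2ccvs} gives Corollary~\ref{cor2}, with the degree-four bound replacing the degree-three bound because there the clique edges among $u_C,v_C,w_C$ raise each literal vertex's degree by one.
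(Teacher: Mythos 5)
Your proposal is correct and takes essentially the same route as the paper: reduce from \textsc{Planar-3-SAT}, reuse Construction~\ref{construction:2ccedvs} unchanged (so the correctness equivalence from Theorem~\ref{theorem:2CCEDVS_NPC} carries over), and obtain planarity by locally expanding each variable node of the planar incidence graph into its cycle while checking the degree bound directly. Your explicit rotation-system argument --- choosing the clause labeling $C(V)_1,\ldots,C(V)_{a(V)}$ to match the cyclic order of edges around each variable in the embedding --- merely spells out the detail the paper compresses into the assertion that each replacement operation preserves planarity.
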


\begin{proof}
Consider an instance of Planar-3-SAT, a variant of 3-SAT where the bipartite graph of the instance is planar.
The graph produced by the previous construction can be modified as follows:
    
\begin{itemize}
        \item Replace every variable vertex by a cycle of a certain length.
        \item Replace every clause vertex by a triangle.
        \item For every clause, merge the three vertices of the triangle with one vertex of each variable cycle appearing in the clause.
    \end{itemize}
    Each of these operations preserves the planarity of the graph.
    We deduce that the modified graph is planar and the previous construction gives a reduction from Planar-3-SAT to {\sc 2CCVS} (resp., \textsc{2CCEDVS}) restricted to planar graphs with maximum degree four (resp., three).
    The proof is now complete, given that Planar-3-SAT is \NP-complete \cite{lichtenstein1982planar}.
\end{proof}

Moreover, knowing that the previous construction is linear in the number of vertices and (resp., Planar) 3-SAT does not admit a $2^{o(n+m)} n^{O(1)}$ (resp., $2^{o(\sqrt{n+m})} n^{O(1)}$) time algorithm \cite{cygan2015lower} unless the Exponential Time Hypothesis (ETH) fails, we conclude the following:

\begin{corollary}
\label{cor4}
Assuming the ETH holds, there is no $2^{o(n+m)} n^{O(1)}$-time (resp., $2^{o(\sqrt{n+m})} n^{O(1)}$)-time algorithm for {\sc 2CCVS} (resp., on planar graphs) with maximum degree four.
\end{corollary}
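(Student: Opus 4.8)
The plan is to transfer the known ETH lower bounds for \textsc{3-SAT} and \textsc{Planar-3-SAT} through Construction~\ref{construction:2ccvs} and its planar variant (from Corollary~\ref{cor2}), exploiting the fact that both reductions are linear in the instance size. This is a routine composition of a size-preserving reduction with a hypothetical subexponential-time algorithm.

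First I would pin down the size of the constructed graph. Let $\phi$ be a 3-CNF formula with $N$ variables and $M$ clauses, and let $G$ be the graph produced by Construction~\ref{construction:2ccvs}. Since each clause contributes exactly three variable occurrences, $\sum_{V} a(V) = 3M$; hence the variable cycles together have $\sum_V 6a(V) = 18M$ vertices and $18M$ edges, and the clause triangles contribute $3M$ further edges. Thus $G$ has $n = 18M$ vertices and $m = 21M$ edges, both $\Theta(M)$, and $G$ is computable from $\phi$ in time linear in $N+M$. Since every variable occurs in at least one clause, $N \le 3M$, so $n + m = \Theta(M) = \Theta(N+M)$.

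Next I would argue by contradiction. Suppose \textsc{2CCVS} on maximum-degree-four graphs admitted an algorithm running in time $2^{o(n+m)}\, n^{O(1)}$. Given $\phi$, I would construct $G$ in linear time and run this algorithm with $k = 11M$; by Theorem~\ref{th:npc} the answer decides the satisfiability of $\phi$. Substituting $n+m = \Theta(N+M)$ and $n = \Theta(N+M)$, the overall running time is $2^{o(N+M)} (N+M)^{O(1)}$, a subexponential-time algorithm for \textsc{3-SAT} that contradicts the ETH. For the planar statement I would use instead the planar, maximum-degree-four construction from Corollary~\ref{cor2} (merging each clause triangle into the corresponding cycle vertices), which is again linear and satisfies $n+m = \Theta(N+M)$; a hypothetical $2^{o(\sqrt{n+m})}\, n^{O(1)}$ algorithm would then yield a $2^{o(\sqrt{N+M})} (N+M)^{O(1)}$ algorithm for \textsc{Planar-3-SAT}, contradicting the corresponding (sharper) ETH lower bound.

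The argument presents no genuine obstacle; the only point requiring care is the bookkeeping that the linear size blow-up leaves the subexponential term intact---that $o(n+m)$ in the assumed running time becomes $o(N+M)$ after substitution (respectively $o(\sqrt{n+m})$ becomes $o(\sqrt{N+M})$), and that the factor $n^{O(1)}$ remains polynomial in $N+M$. All of this is immediate from $n+m = \Theta(N+M)$, so the entire proof reduces to verifying the linearity of the two constructions, which was already observed just before the statement.
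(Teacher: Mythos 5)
Your proposal is correct and follows exactly the paper's reasoning: the paper justifies this corollary by the one-line observation (stated just before it) that Construction~\ref{construction:2ccvs} and its planar variant are linear-size reductions from 3-SAT and Planar-3-SAT, whose ETH lower bounds of $2^{o(n+m)}$ and $2^{o(\sqrt{n+m})}$ then transfer directly. Your explicit bookkeeping ($n=18M$, $m=21M$, $N\le 3M$) merely fills in the arithmetic the paper leaves implicit.
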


\begin{corollary}

Assuming the ETH holds, there is no $2^{o(n+m)} n^{O(1)}$-time (resp., $2^{o(\sqrt{n+m})} n^{O(1)}$)-time algorithm for \textsc{2CCEDVS} (resp., on planar graphs) with maximum degree three.
\end{corollary}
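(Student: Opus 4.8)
The plan is to mirror the argument used for \textsc{2CCVS} in \cref{cor4}, transporting the ETH lower bounds for 3-SAT and Planar-3-SAT through the two reductions already established in \cref{theorem:2CCEDVS_NPC} and \cref{cor3}. The essential observation, which I would state and verify first, is that \cref{construction:2ccedvs} is \emph{linear} in the size of the input formula. Each variable $V$ contributes a cycle on $6a(V)$ vertices, so the variable cycles together contain $\sum_{V} 6a(V) = 18M$ vertices (using $\sum_{V} a(V) = 3M$), and we add exactly $M$ clause vertices. Hence the constructed graph has $n = 18M + M = O(M)$ vertices and, being of bounded degree, $m = O(M)$ edges; thus $n + m = \Theta(N + M)$, where $N$ and $M$ denote the numbers of variables and clauses of $\phi$.

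For the non-planar bound, I would argue by contradiction: suppose \textsc{2CCEDVS} admitted an algorithm running in time $2^{o(n+m)} n^{O(1)}$. Composing it with the reduction of \cref{theorem:2CCEDVS_NPC} --- which runs in time polynomial in $N + M$ and outputs an instance with $n + m = O(N+M)$ --- would decide 3-SAT in time $2^{o(N+M)} (N+M)^{O(1)}$. By the sparsification lemma this contradicts the ETH, since under the ETH 3-SAT admits no $2^{o(N+M)}$-time algorithm \cite{cygan2015lower}, yielding the first part of the statement.

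For the planar bound the same scheme applies, using instead the stronger reduction from \cref{cor3}, which maps a Planar-3-SAT instance to a \emph{planar} \textsc{2CCEDVS} instance of maximum degree three while preserving the linear size bound $n + m = O(N+M)$. A hypothetical $2^{o(\sqrt{n+m})} n^{O(1)}$-time algorithm on planar instances would then solve Planar-3-SAT in time $2^{o(\sqrt{N+M})} (N+M)^{O(1)}$, contradicting the corresponding ETH-based lower bound for Planar-3-SAT \cite{cygan2015lower}.

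The step requiring the most care is the size accounting: one must check not merely that the vertex count is $O(M)$ but that \emph{both} $n$ and $m$ are $\Theta(N+M)$ (up to the polynomial slack absorbed by the $n^{O(1)}$ factor), so that a subexponential running time in $n+m$ transfers to a subexponential running time in $N+M$ without losing the $o(\cdot)$ in the exponent. Once this linear-blowup property is confirmed, everything else is a routine composition of a polynomial-time reduction with the conjectured subexponential algorithm, exactly as in \cref{cor4}.
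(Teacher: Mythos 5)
Your proposal is correct and follows exactly the paper's (implicit) argument: the paper justifies this corollary in a single sentence by observing that Construction~\ref{construction:2ccedvs} has size linear in the formula and then invoking the ETH lower bounds for 3-SAT and Planar-3-SAT from \cite{cygan2015lower}, which is precisely the composition argument you spell out. Your explicit size accounting ($n=19M$, $m=21M$, hence $n+m=\Theta(M)$) and the contradiction-by-composition are just the details the paper leaves to the reader.
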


\subsection{The Complexity of \TCCVS{} on Trees.}
    
\begin{theorem}  \label{theorem:2CCVS_trees}
\textsc{2CCVS} is solvable in polynomial time on trees.
\end{theorem}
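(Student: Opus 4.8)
The plan is to reduce \TCCVS{} on trees to \textsc{Minimum Vertex Cover}, which is solvable in linear time on trees. Throughout let $T=(V,E)$ be the input tree with $n=|V|\ge 2$, and let $\tau(T)$ denote the size of a minimum vertex cover of $T$. I will establish the clean structural identity $2ccvs(T)=\tau(T)-1$, from which polynomial-time solvability is immediate: $T$ is bipartite, so $\tau(T)$ equals the size of a maximum matching by König's theorem, and a minimum vertex cover of a tree is computed by a standard bottom-up dynamic program in $\Oh(n)$ time. As a sanity check, for $P_5$ one has $\tau(P_5)=3$, giving $2ccvs(P_5)=3-1=2$, matching the value recorded in the preliminaries.

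The core of the argument is to show that a solution is essentially a partition of $E$ into stars. First I would observe that splitting a vertex of a tree cannot create a cycle, so each connected component produced by a sequence of exclusive splits is again a tree; and a tree of diameter at most $2$ is exactly a star $K_{1,t}$. Hence, after identifying every copy with its original vertex, a solution turning $T$ into a disjoint union of $2$-clubs corresponds to a partition of $E$ into stars $S_1,\dots,S_m$. Counting copies, a vertex $v$ lying in $c(v)$ of these stars must be split exactly $c(v)-1$ times; summing and using that each $S_i$ is a subtree gives $\sum_v (c(v)-1)=\sum_i |V(S_i)| - n = ((n-1)+m)-n = m-1$. Thus minimizing splits is the same as minimizing the number of stars in an edge-partition of $T$.

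Next I would identify the minimum number of stars in an edge-partition with $\tau(T)$. Assigning each edge to one of its two endpoints (its \emph{center}) yields an edge-partition into stars whose number is the number of distinct centers used; the used centers touch every edge and hence form a vertex cover, so their number is at least $\tau(T)$. Conversely, given a minimum vertex cover $D$, routing every edge to an endpoint in $D$ produces an edge-partition using at most $|D|=\tau(T)$ stars. Therefore the minimum number of stars equals $\tau(T)$, and combined with the previous paragraph this yields $2ccvs(T)=\tau(T)-1$. The upper-bound direction is fully constructive: compute a minimum vertex cover $D$, route each edge to a cover endpoint, and realise the resulting star-partition with $\tau(T)-1$ exclusive splits.

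The step I expect to be most delicate is the lower bound for arbitrary, not necessarily exclusive, splits. A non-exclusive split may duplicate an edge and could, in principle, build a $2$-club that is not a star (for instance a $C_4$), so the "every component is a star" reasoning no longer applies verbatim. To handle this I would argue that non-exclusive splits never help: a sequence of $k$ splits starting from the connected tree yields at most $k+1$ components, so $k\ge m-1$, and for the \emph{exclusive} case a fixed maximum matching of $T$ (of size $\tau(T)$) meets each star component in at most one edge, forcing $m\ge\tau(T)$ and hence $k\ge\tau(T)-1$. The remaining work is to show that an optimal solution may always be taken to be exclusive, so that this bound transfers to the general case; intuitively, edge duplication only adds copies without decreasing the number of components, so it cannot beat $\tau(T)-1$. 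Once this reduction to exclusive splits is secured, the identity $2ccvs(T)=\tau(T)-1$ together with linear-time vertex cover on trees completes the proof.
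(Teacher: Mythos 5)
Your route is genuinely different from the paper's: the paper proves a local leaf-removal recurrence (Lemma~\ref{lemma:tree}), namely $2ccvs(T)=2ccvs(T\setminus\{v,w\})+k$ for a leaf $v$ whose neighbor $w$ has $k$ non-leaf neighbors, and evaluates it recursively in $O(n^2)$ time; you instead derive the closed form $2ccvs(T)=\tau(T)-1$ via star edge-partitions and minimum vertex cover. The identity is correct, and it is consistent with the paper's recurrence (for a leaf $v$ with neighbor $w$ one has $\tau(T)=1+\tau(T\setminus\{v,w\})$, and summing your formula over the components of $T\setminus\{v,w\}$ recovers the paper's formula). Your upper bound is complete, and your lower bound for \emph{exclusive} splits (components are stars, the copy count $\sum_v(c(v)-1)=m-1$, and the matching argument giving $m\ge\tau(T)$) is sound. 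If finished, your approach buys a cleaner structural characterization and a linear-time algorithm in place of the paper's quadratic one.

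However, the step you defer is a genuine gap, and it cannot be waved off: the paper explicitly does \emph{not} assume splits are exclusive, so $2ccvs$ is defined over non-exclusive splits, and that is the model in which the lower bound must be proved. There, components of the final graph need not be stars (splitting the middle vertex of a path $u\text{--}v\text{--}w$ into two copies both adjacent to $u$ and $w$ creates a $C_4$), and your edge-counting identity fails because edges are duplicated. Moreover, the repair you sketch --- converting an optimal solution into an exclusive one --- is the hard direction: pruning duplicated edge-copies can destroy the 2-club property (deleting an edge of a $C_4$ leaves a $P_3$), so no obvious local exchange exists. The clean fix bypasses exclusivity entirely. Let $H$ be the final graph and $\pi$ the map sending each copy to its original vertex. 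First, since a split satisfies $N(v)=N(v_1)\cup N(v_2)$, every edge of $T$ retains at least one copy in $H$. Second, for every component $K$ of $H$, the induced subgraph $T[\pi(V(K))]$ is a 2-club: any two of its vertices have preimages in $K$ at distance at most $2$, and paths in $H$ project to walks of the same length inside $\pi(V(K))$; being a connected induced subgraph of a tree with diameter at most $2$, it is a star. Now fix a maximum matching $M$ of $T$, of size $\tau(T)$ by K\H{o}nig's theorem: each edge of $M$ lies inside $\pi(V(K))$ for some component $K$, and no single component can host two disjoint edges since its projection is a star; hence $H$ has at least $\tau(T)$ components. Since $k$ splits applied to a connected graph leave at most $k+1$ components, $k\ge\tau(T)-1$. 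Substituting this argument for your final paragraph makes the proof complete.
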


\begin{lemma} \label{lemma:tree}
    Let $T$ be a tree, and let $v$ be a leaf vertex of $T$.
    If $w$ is the unique neighbor of $v$ in $T$ and $k$ is the number of neighbors of $w$ which are not leaves (see Figure \ref{fig:tree-example}), then $2ccvs(T) = 2ccvs(T\setminus \{v,w\}) + k$.
\end{lemma}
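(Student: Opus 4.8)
The plan is to prove the claimed identity by establishing the two inequalities $2ccvs(T) \le 2ccvs(T\setminus\{v,w\}) + k$ and $2ccvs(T) \ge 2ccvs(T\setminus\{v,w\}) + k$ separately. First I would record the local structure. Write $u_1,\dots,u_k$ for the non-leaf neighbours of $w$ and $v=v_1,v_2,\dots,v_\ell$ for its leaf neighbours (so $\ell\ge 1$ since $v$ is a leaf whose unique neighbour is $w$). Deleting $v$ and $w$ leaves the rooted subtrees $T_{u_1},\dots,T_{u_k}$ hanging off the $u_i$, together with the now-isolated vertices $v_2,\dots,v_\ell$. Since isolated vertices are already $2$-clubs and $2ccvs$ is additive over connected components, $2ccvs(T\setminus\{v,w\})=\sum_{i=1}^k 2ccvs(T_{u_i})$, so it suffices to compare $2ccvs(T)$ with $\sum_i 2ccvs(T_{u_i})+k$.

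For the upper bound I would give an explicit solution: split each non-leaf neighbour $u_i$ once, into a copy carrying only the edge to $w$ and a copy carrying all edges to its children in $T_{u_i}$. After these $k$ splits the vertex $w$ together with $v_1,\dots,v_\ell$ and the $k$ childless copies forms a star, hence a $2$-club, while the remainder of the graph is exactly the disjoint union of the $T_{u_i}$. Solving each $T_{u_i}$ by an optimal sequence of $2ccvs(T_{u_i})$ splits then yields a disjoint union of $2$-clubs using $k+\sum_i 2ccvs(T_{u_i})$ splits in total.

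The lower bound is the main obstacle. I would take an optimal solution $S$ for $T$ and work with its resulting graph $G_S$. Here I would assume $S$ uses exclusive splits so that $G_S$ is a forest (consistent with the paper's remark that the exclusive setting is the relevant one; checking that this does not change the optimum is a routine but necessary technical point). The crucial enabling fact is that a connected $2$-club which is a tree is a star, so deleting any vertex from a $2$-club forest again yields a disjoint union of $2$-clubs. Hence deleting all copies of $v$ and $w$ from $G_S$ produces a valid solution for $T\setminus\{v,w\}$, whose cost is that of $S$ minus the splits spent on $v$ and $w$. It then remains to recover $k$ further savings. For this I would examine, for each $u_i$, the unique copy $\hat u_i$ bearing the edge to $w$ and the $w$-copy $\hat w_i$ it attaches to. Using that in the tree $T$ any child $c_i$ of $u_i$ lies at distance $3$ from every leaf of $w$ and from every other $u_j$ — a distance that vertex splitting can only increase, since it never creates new adjacencies — I would establish the dichotomy: \emph{either} $\hat u_i$ carries no child edge (so $u_i$ is genuinely split, and once $w$ is removed $\hat u_i$ becomes an isolated copy that can be merged away for one extra saving), \emph{or} $\hat w_i$ is adjacent to nothing but $\hat u_i$, i.e.\ it is a ``private'' copy of $w$.

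Finally I would collect the savings. The private $w$-copies of the second case are pairwise distinct and distinct from the $w$-copy that must carry the leaf $v$, so they already account for at least that many of the $w$-splits removed in the first step; the isolated $\hat u_i$ of the first case each contribute one extra merge; and since every $u_i$ falls into exactly one case, these contributions live on disjoint vertices and sum to at least $k$. This yields a solution for $T\setminus\{v,w\}$ of cost at most $2ccvs(T)-k$, giving $2ccvs(T)\ge 2ccvs(T\setminus\{v,w\})+k$, which together with the upper bound proves the equality. The points I expect to require the most care are the precise distance-$3$ arguments underpinning the dichotomy and verifying that the two families of savings are genuinely disjoint, so that no split is charged twice.
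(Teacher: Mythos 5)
Your upper bound and the savings/dichotomy counting in your lower bound are sound, and your route is genuinely different from the paper's: the paper first shows by a local exchange argument that an optimal solution can be assumed to split neither $v$ nor $w$, and then peels off one split per non-leaf neighbor of $w$, using the fact that a copy of such a neighbor attached to $w$ cannot retain any other edge without leaving a vertex at distance $3$ from $v$. However, your argument has one genuine gap: the entire lower bound rests on the assumption that an optimal solution uses \emph{exclusive} splits, so that the resulting graph $G_S$ is a forest and every component is a star. The paper defines $2ccvs$ with non-exclusive splits ($N(v)=N(v_1)\cup N(v_2)$, intersection allowed), and under that definition the resulting graph need not be a forest: splitting the middle vertex of a path $a-x-b$ into two copies that each keep both $a$ and $b$ produces a $C_4$. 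So the facts you lean on --- ``a connected $2$-club that is a tree is a star,'' hence deleting all copies of $v$ and $w$ leaves a disjoint union of $2$-clubs, and hence the dichotomy (a copy $\hat u_i$ carrying a child edge forces its $w$-copy to be pendant) --- are precisely what is in doubt for general splits. Note also that the paper's remark that its proofs ``apply to the restricted version'' only asserts that the results also hold for exclusive splitting; it does not assert that the exclusive and non-exclusive optima coincide, which is what you need.

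The missing equivalence is in fact true on trees, but it requires a real argument rather than a one-line appeal: for instance, merge all copies of each original vertex inside every final component; merging cannot increase distances, and the merged component is a connected spanning subgraph of an induced subgraph of $T$, hence equals an induced subtree of diameter at most $2$, i.e.\ a star; this turns any non-exclusive solution into a cover of $T$ by star-inducing sets covering every edge, with total overlap at most the number of splits, and such a cover can be re-realized by exclusive splits of the same total cost. (Alternatively, you could redo your dichotomy without exclusivity, using only that splits never create adjacencies between copies of non-adjacent vertices --- essentially the distance-$3$ projection argument the paper uses --- but then ``private'' copies and the deletion step need to be re-justified, since deleting vertices from a general $2$-club can destroy the $2$-club property.) Either way, this step is the crux of your lower bound rather than a routine technicality, and as written the proof is incomplete without it.
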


\begin{proof}
    Let $S$ be a sequence of $p$ splits of $T \setminus \{v,w\}$.
    We get a disjoint union of $2$-clubs by applying this sequence to $T$ and then by splitting the $k$ neighbors of $w$ which are not leaves such that each split separates $w$ from the graph $T \setminus \{v,w\}$ while resulting in a disjoint 2-club containing $w$.
    The sequence of splits is of length $p+k$.

    Let $S'$ be a sequence of splits of $T$ of length $p$ leading to a disjoint union of $2$-clubs.
    As splitting $v$ is useless (because $v$ is of degree $1$), we can suppose that $v$ is not in the sequence.
    Suppose that $w$ is in the sequence.
    Let us show that we can find another sequence with $p$ splits such that $w$ is not split.
    Let $w_0, \ldots , w_r$ be the copies of $w$ (the vertex is split $r$ times).

    \begin{figure}[!h]
        \centering
        
\begin{tikzpicture}
    \node[circle, draw, fill=white] (v) at (0,0) {$v$};
    \node[circle, draw, fill=white] (w) at (2,0) {$w$};
    \node[circle, draw, fill=white] (x) at (4,0) {$x$};

    \node[circle, draw, fill=white, minimum size=0.5cm] (w1) at (2,-1.5) {};
    \node[circle, draw, fill=white, minimum size=0.5cm] (w2) at (2,1.5) {};
    \node[circle, draw, fill=white, minimum size=0.5cm] (w3) at (1,1.5) {};
    \node[circle, draw, fill=white, minimum size=0.5cm] (w4) at (1,-1.5) {};

    \node[circle, draw, fill=white, minimum size=0.5cm] (x1) at (4,1.5) {};
    \node[circle, draw, fill=white, minimum size=0.5cm] (x2) at (4,-1.5) {};

    \node[circle, draw, fill=white, minimum size=0.5cm] (w1a) at (3,-2.5) {};
    \node[circle, draw, fill=white, minimum size=0.5cm] (w1b) at (1,-2.5) {};

    \node[circle, draw, fill=white, minimum size=0.5cm] (w2a) at (3,2.5) {};
    \node[circle, draw, fill=white, minimum size=0.5cm] (w2b) at (1,2.5) {};

    \draw (v) -- (w) -- (x);
    \draw (w) -- (w1);
    \draw (w) -- (w2);
    \draw (w) -- (w3);
    \draw (w) -- (w4);
    \draw (x) -- (x1);
    \draw (x) -- (x2);

    \draw (w1) -- (w1a);
    \draw (w1) -- (w1b);
    \draw (w2) -- (w2a);
    \draw (w2) -- (w2b);

\end{tikzpicture}
\caption{Example of a tree with a leaf $v$ and its neighbor $w$. The vertex $x$ is a non-leaf neighbor of $w$. The vertex $w$ has $3$ neighbors which are not leaves.}
\label{fig:tree-example}
\end{figure}

Suppose that there exists a copy $w'$ of $w$ such that there exists $x_1$ and $x_2$ two neighbors of $w$ which are not leaves such that $w'$ is connected to a copy $x_1'$ of $x_1$ and to a copy $x_2'$ of $x_2$.
Then, the copy $x_1'$ of $x_1$ is not connected to any of the neighbors of $x_1$ other than $w$, otherwise this neighbor would be at distance $3$ to $x_2'$.
In the same way, the copy $x_2'$ of $x_2$ is not connected to any of the neighbors of $x_2$ other than $w$.

If $w$ is split more than $k$ times, then we can just undo the splitting of $w$ (or {\em unsplit} it) and split the non-leaf neighbors of $w$ one more time by creating a copy of $x$ which is only connected to $w$ and delete the edges from the other copies of $x$ to $w$.
Otherwise, $w$ is split less than $k$ times.
Let $w_0$ be a copy of $w$ which is adjacent to $v$.
Consider another copy $w'$ of $w$.
Suppose that $w'$ is adjacent to more than two copies of non-leaf neighbors.
Consider $x$ and $y$ two of these non-leaf neighbors.
As the resulting graph is disjoint union of $2$-clubs, then these copies are not adjacent to other vertices than $w'$.
Thus, we can delete the copy $w'$ and connect its neighbors to $w_0$.
Therefore, we can suppose that $w$ is split less than $k$ times and that all the copies are connected to at most $1$ copy of a non-leaf neighbor of $w$.
This is a contradiction because $w$ would need to be split at least $k$ times, one time for each non-leaf neighbor.

We can therefore suppose that $v$ and $w$ are not split.
Then, there is a sequence of $p$ splits of $T \setminus \{v,w\}$ leading to a disjoint union of $2$-clubs.
Let $x$ be a neighbor of $w$ which is not a leaf.
Then $x$ has a neighbor $y$ which is different from $w$.
There exists a copy $x'$ of $x$ which is connected to $w$.
If this copy is also connected to a copy $y'$ of $y$, then $y'$ would be at distance $3$ from $v$ in the resulting graph, a contradiction.
We deduce that $x'$ is only connected to $w$.
Thus, we can find a sequence of splits of the branch at $x$ of the tree rooted in $v$ with one less split.
As we can do it for every neighbor of $w$ which is not a leaf, we can find a sequence of $p-k$ splits of $T \setminus \{v,w\}$ leading to a disjoint union of $2$-clubs. We conclude that $2ccvs(T) = 2ccvs(T \setminus \{v,w\}) + k$.

As we can find a leaf in linear time, recursively applying the formula of Lemma~\ref{lemma:tree} on a tree $T$ with $n$ vertices leads to computing $2ccvs(T)$ in $O(n^2)$ time.
\end{proof}

\subsection{The Complexity of \TCCEDVS{} on Trees.}

\begin{lemma} \label{lemma:2ccedvs_rec_simple}

Let $y$ be a cut vertex of a graph $G$ such that $A\cup y$ is a 2-club, where $A$ is a connected component in $G\setminus y$ containing a vertex at distance 2 from $y$. If $y$ is adjacent to exactly one vertex in $B = V(G) \setminus (\{y\} \cup A)$, then $2ccedvs(G) = 2ccedvs(G - y -A) + 1$.
\end{lemma}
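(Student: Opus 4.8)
The plan is to prove the two inequalities $2ccedvs(G) \le 2ccedvs(G-y-A)+1$ and $2ccedvs(G) \ge 2ccedvs(G-y-A)+1$ separately, writing $z$ for the unique neighbor of $y$ in $B = V(G)\setminus(\{y\}\cup A)$.

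For the upper bound I would take an optimal edit sequence for $G-y-A$ (the subgraph induced on $B$) and add to it the single deletion of the edge $y-z$. Since $y$ is adjacent to $z$ and to nothing else in $B$, deleting $y-z$ disconnects $\{y\}\cup A$ from $B$; the former is a $2$-club by hypothesis, and the latter is turned into a disjoint union of $2$-clubs by the chosen sequence. These two parts do not interact, so this is a valid solution of size $2ccedvs(G-y-A)+1$.

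The harder direction is the lower bound. The main tool I would use is the \emph{folding map} $\pi$ sending every copy of a vertex back to its original: since neither edge deletion nor vertex splitting ever creates a new adjacency, $\pi$ maps each edge of the edited graph to an edge of $G$, so distances can only grow, i.e. $d_{G'}(u',v') \ge d_G(\pi(u'),\pi(v'))$ for any copies $u',v'$ in the result $G'$ of applying a solution. Let $a\in A$ satisfy $d_G(a,y)=2$; since $y$ is a cut vertex whose only $B$-neighbor is $z$, every path from $a$ to $z$ passes through $y$, so $d_G(a,z)=3$. Hence in any solution every copy of $a$ is at distance at least $3$ from every copy of $z$, and they must lie in different $2$-clubs. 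From this I would first argue that any solution contains at least one operation incident to the interface between $\{y\}\cup A$ and $B$ (otherwise $A\cup\{y\}$ stays intact and still attached to a copy of $z$, keeping $a$ and $z$ in one component at distance $3$). I would then show that the operations of the solution that are internal to $B$ already turn $G-y-A$ into a disjoint union of $2$-clubs, so that discarding the interface operation leaves a valid solution for $G-y-A$ of size at most $2ccedvs(G)-1$.

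The step I expect to be the main obstacle is precisely this last claim, namely that the $B$-internal operations suffice for $G-y-A$. The danger is that a clever solution could delete the edges joining $y$ to $A$, keep a copy of $y$ alive, and use it as a \emph{shortcut} between two copies of $z$ so as to save splits or deletions inside $B$; removing such a copy of $y$ would then break a $2$-club of $G'$ and the restriction to $B$ would no longer be valid. This is exactly where the two structural hypotheses are used: because $y$ has a single neighbor $z$ in $B$, an unsplit copy of $y$ is adjacent to at most one copy of $z$ and hence can never act as a shortcut, so realizing such a shortcut forces a split of $y$; and because $A$ contains a vertex at distance $2$ from $y$, any copy of $y$ still attached to $A$ cannot share a component with a copy of $z$. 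Combining these, I would run an exchange argument in the spirit of the unsplitting argument of Lemma~\ref{lemma:tree}: whenever a solution uses a copy of $y$ as a $z$--$z$ shortcut, or otherwise spends more than one operation at the interface, I replace those operations by the single deletion of $y-z$ (leaving $A\cup\{y\}$ as an untouched $2$-club) without increasing the total count and without harming the $B$-side. Once the solution is in this canonical form, the interface costs exactly one operation and the remaining operations solve $G-y-A$, which yields the matching lower bound and hence the claimed equality.
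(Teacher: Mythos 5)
Your overall architecture is the same as the paper's: the upper bound is proved identically (append the deletion of $y-z$ to an optimal sequence for $G-y-A$), and the lower bound is an exchange argument that canonicalizes an optimal solution into ``delete $y-z$ plus operations internal to $B$'', built on the distance-$3$ pair $a,z$ (the paper phrases this via the geodesic $(a,x,y,z)$). Your folding-map observation and your explicit identification of the shortcut danger are in fact more careful than the paper's own terse treatment of this direction.

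However, the way you resolve what you yourself call the main obstacle rests on a false claim: ``because $y$ has a single neighbor $z$ in $B$, an unsplit copy of $y$ is adjacent to at most one copy of $z$.'' The paper's splits are \emph{not} exclusive: splitting $z$ into $z_1,z_2$ only requires $N(z)=N(z_1)\cup N(z_2)$, so both copies may keep their edge to $y$. Hence an unsplit $y$ can be adjacent to two copies of $z$ and act as a $z$--$z$ shortcut, a shortcut does not force a split of $y$, and the exchange you build on top of this claim is not justified as stated. (Your companion claim that a copy of $y$ ``still attached to $A$'' cannot share a component with a copy of $z$ has a similar defect: it fails if the attached part of $A$ is, say, a pendant copy of $x$ that has lost its edge to $a$.) The correct repair is not the single-$B$-neighbor hypothesis but minimality: in a minimum-length solution, no two copies of the same vertex can lie in the same component, since merging them back (un-splitting) never increases distances inside that component and saves one operation --- this is exactly the argument used in the proof of Lemma~\ref{lemma:allornothing}. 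With that in hand, no component contains two copies of $z$, so no copy of $y$ has two neighbors among copies of $B$-vertices, and copies of $A$-vertices have none at all; therefore deleting every copy of $y$ and of $A$-vertices from the final graph cannot destroy any required path of length at most $2$ between copies of $B$-vertices, the operations internal to $B$ do solve $G-y-A$, and your counting argument (at least one operation is not internal to $B$) then finishes the lower bound.
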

\begin{proof}
Let $a$ be a vertex in $A$ at distance 2 from $y$ and $x$ be a vertex of $N(y) \cap N(a)$ and $z$ be the only neighbor of $y$ in $B$ (see Figure~\ref{fig:2ccedvs_tree_lemma1}).
    
    \begin{figure}[h] 
    \centering
    \begin{tikzpicture}
    [
        yscale=-1,
        node_style/.style={circle, draw, minimum size=0.5cm} 
    ]
    \def\fscale{5}  \node[node_style] (0) at ($\fscale*(-0.5, 0)$) {$a$};
	 \node[node_style] (1) at ($\fscale*(-0.25, 0.24)$) {$w$};
	 \node[node_style] (2) at ($\fscale*(-0.25, -0.24)$) {$x$};
	 \node[node_style] (3) at ($\fscale*(-0.02, 0)$) {$y$};
	 \node[node_style] (4) at ($\fscale*(0.27, 0)$) {$z$};
	 \node[node_style] (5) at ($\fscale*(0.5, -0.24)$) {};
	 \node[node_style] (6) at ($\fscale*(0.5, 0.24)$) {};

	 \draw (0) to  (1);
	 \draw (1) to  (2);
	 \draw (2) to  (0);
	 \draw (2) to  (3);
	 \draw (3) to  (1);
	 \draw (3) to  (4);
	 \draw (4) to  (5);
	 \draw (4) to  (6);
\end{tikzpicture}
    \caption{Example of a graph with a cut vertex $y$ connected to only one vertex $z$ in $B$. The optimal solution consists here in deleting the edge $y-z$ in any case. }
    \label{fig:2ccedvs_tree_lemma1}
\end{figure}
    
Suppose that $y$ is adjacent to exactly one vertex $z$ in $B$.
Let $\sigma$ be a sequence of edge deletion(s) and vertex split(s) in  $G-y-A$.
We add an edge deletion of $y-z$ at the beginning of $\sigma$.
Then, this new sequence is of length one more than $\sigma$ and it turns $G$ into a disjoint union of 2-clubs.
Thus, $2ccedvs(G) \leq 2ccedvs(G-y-A)+1$.

Let $\sigma$ be a sequence of length $2ccedvs(G)$ of operations turning $G$ into a disjoint union of 2-clubs.
Because of the geodesics $(a,x,y,z)$, either one of its edges should be deleted, or one of the vertices $\{x,y\}$ should be split.
If the sequence contains a vertex split, we can replace it by an edge deletion on $y-z$.
This new sequence has the same length and is still turning $G$ into a disjoint union of 2-clubs.
Thus, by replacing the vertex split, we get a sequence of operations turning $G-y-A$ into a disjoint union of 2-clubs.
We deduce that $2ccedvs(G) \geq 2ccedvs(G-y-A) +1$ and we conclude that $2ccedvs(G) = 2ccedvs(G-y-A) +1$.
\end{proof}

\begin{definition}
    A 2-club cover of a graph $G$ is a cover $C=\{S_1,S_2,\ldots\}$ of the vertices of $G$, such that for every $S \in C$, $G[S]$ is a 2-club.
    We define the cost of $C$ as follows:
    for every vertex $v$ of $G$, $cost(v,C) = |\{ S \in C | v \in S\}| -1$ and for every edge $u-v$ of $G$, $cost(u-v, C) = 0$ if there exists a set $S \in C$ such that $\{u,v\} \subseteq S$ and 1 otherwise.
    We define the total cost of $C$ as: $cost(C) = \sum_{v \in V} cost(v,C) + \sum_{e \in E} cost(e,C)$. An example is illustrated in Figure \ref{fig:2clubcover}.
\end{definition}

\begin{figure}[h]
    \centering
    \begin{tikzpicture}
    [
        yscale=-1,
        node_style/.style={circle, draw, minimum size=0.5cm} 
    ]
    \def\fscale{5}  \node[node_style] (0) at ($\fscale*(-0.5, 0)$) {$0$};
	 \node[node_style] (1) at ($\fscale*(-0.3, -0.2)$) {$1$};
	 \node[node_style] (2) at ($\fscale*(-0.3, 0.2)$) {$2$};
	 \node[node_style] (3) at ($\fscale*(-0.1, 0)$) {$3$};
	 \node[node_style] (4) at ($\fscale*(0.1, -0.2)$) {$4$};
	 \node[node_style] (5) at ($\fscale*(0.3, -0.2)$) {$5$};
	 \node[node_style] (6) at ($\fscale*(0.1, 0.2)$) {$6$};
	 \node[node_style] (7) at ($\fscale*(0.3, 0.2)$) {$7$};
	 \node[node_style] (8) at ($\fscale*(0.5, 0.2)$) {$8$};

	 \draw (0) to  (1);
	 \draw (1) to  (2);
	 \draw (2) to  (0);
	 \draw (1) to  (3);
	 \draw (3) to  (2);
	 \draw (3) to  (4);
	 \draw (4) to  (5);
	 \draw (3) to  (6);
	 \draw (6) to  (7);
	 \draw (7) to  (8);

     \draw[rounded corners] (-3,-1.5) rectangle (-0.2,1.5);
     \draw[rounded corners] (0.1,0.5) rectangle (3,1.5);
          \draw[rounded corners] (-1,-1.4) rectangle (2,0.4);
\end{tikzpicture} 
    \caption{
    Example of a 2-club cover of a graph with three sets $\{0,1,2,3\}$, $\{3,4,5\}$ and $\{6,7,8\}$. The cost of this cover is 2 because $cost(3) = 1$ and $cost(3,6) = 1$. This cover corresponds to delete edge $3-6$ and to split vertex $3$.    
    }
     \label{fig:2clubcover}
\end{figure}

\begin{lemma}
    The minimum cost of a 2-club cover of a graph $G$ equals $2ccedvs(G)$.
\end{lemma}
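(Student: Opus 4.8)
The plan is to establish the equality by proving the two inequalities $2ccedvs(G) \le \min_C cost(C)$ and $\min_C cost(C) \le 2ccedvs(G)$ separately, translating covers into edit sequences and vice versa. The bridge between the two notions is the observation that a vertex appearing in $t$ sets of a cover should correspond to exactly $t-1$ splits, while an edge contained in no common set should correspond to exactly one deletion; matching these up term by term is what forces the two quantities to coincide.

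For the first inequality I would start from a minimum-cost 2-club cover $C$ and build an edit sequence of length $cost(C)$. First delete every edge $u-v$ with $cost(u-v,C)=1$; there are $\sum_{e} cost(e,C)$ of them. Then, for each vertex $v$ lying in sets $S_1,\dots,S_{m_v}$ of $C$, perform $m_v-1 = cost(v,C)$ binary splits to create one copy $v_{S_j}$ per set, routing the surviving (covered) incident edges so that $v_{S_j}$ keeps exactly the edges that $G[S_j]$ has at $v$. Because splits may be non-exclusive, an edge covered by several sets can be duplicated to appear in each of them, so the resulting graph is precisely the disjoint union $\bigsqcup_{S\in C} G[S]$, each component of which is a 2-club by hypothesis. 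The total number of operations is $\sum_v cost(v,C)+\sum_e cost(e,C)=cost(C)$, which gives $2ccedvs(G)\le cost(C)$.

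For the converse I would take an optimal sequence $\sigma$ of $2ccedvs(G)$ operations turning $G$ into a disjoint union of 2-clubs $H=\bigsqcup_i D_i$ and set $S_i=\{v\in V(G): \text{a copy of } v \text{ lies in } D_i\}$. Since no vertex is ever destroyed, $\{S_i\}$ covers $V(G)$. The key invariant, proved by induction over the operations of $\sigma$, is that neither deletion nor (non-exclusive) splitting ever creates an edge between copies of two vertices that are non-adjacent in $G$, and that copies of the same vertex stay non-adjacent. Hence adjacent copies in $D_i$ have adjacent originals, and a length-$2$ path between copies in $D_i$ projects to a path of length at most $2$ between distinct originals in $G$; this shows each $G[S_i]$ is a 2-club, so $\{S_i\}$ is a valid cover. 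To bound its cost: a vertex split $s_v$ times has $s_v+1$ copies, hence lies in at most $s_v+1$ components, so $cost(v,\{S_i\})\le s_v$ and summing gives $\sum_v cost(v,\{S_i\})$ at most the number of splits. For edges, the same duplication argument shows that if $u-v$ is never deleted then a surviving copy-edge places $u$ and $v$ in a common $S_i$; contrapositively every edge with $cost(u-v,\{S_i\})=1$ is deleted at least once, and charging each such edge to one of its deletions is injective, so $\sum_e cost(e,\{S_i\})$ is at most the number of deletions. Adding the two bounds yields $cost(\{S_i\})\le|\sigma|$, and combining with the first inequality finishes the proof.

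I expect the main obstacle to be this converse direction, and specifically the proof that each $G[S_i]$ is a 2-club, which rests entirely on the invariant that edit operations never introduce adjacencies absent from $G$. This invariant must be stated carefully, including the behaviour of non-exclusive splits (which duplicate edges but create no new cross-adjacencies), and then used to project distances from $H$ back to $G$. The subsequent bookkeeping matching vertex-cost to splits and edge-cost to deletions is routine once the invariant is in place, though one must take care to charge distinct uncovered edges to distinct deletion operations so that the edge bound holds.
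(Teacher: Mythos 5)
Your proposal is correct and follows essentially the same route as the paper: one direction deletes the cost-$1$ edges and performs one split per extra set containing a vertex to build $\bigsqcup_{S\in C} G[S]$, and the converse defines the cover from the original vertices of the components of the edited graph. Your version is in fact somewhat more careful than the paper's, since you explicitly prove the projection invariant (edits never create adjacencies between distinct originals) that justifies each $G[S_i]$ being a 2-club, and you use inequalities in the cost accounting where the paper asserts an exact correspondence; but these are refinements of the same argument, not a different approach.
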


\begin{proof}
    Let $C = \{S_1, \ldots, S_c\}$ be a 2-club cover of $G$.
    We delete every edges of cost 1.
    We enumerate the vertices of $G$ as $v_1, \ldots, v_n$.
    For every $i \in \{1, \ldots, n\}$.
    For every $j \in \{1, \ldots, c\}$, if $v_i \in S_j$, we split $v_i$ such that we make a copy $v_{i,j}$ and this copy is adjacent to $N(v_i) \cap S_j$ and $v_i$ is adjacent to $N(v_i) \cap \overline{S_j}$.
    For the last set $S_j$ containing $v_i$ we do not split it because at this point $v_i$ is only adjacent to vertices in $S_j$.
    Therefore, we do $|\{ S \in C | v_i \in S\}| -1$ splits on $v_i$.

    Thus, the sequence is of length $cost(C)$ and it turns $G$ into a disjoint union of 2-clubs because the connected components of the obtained graph correspond to subgraphs of $G$ induced by the sets of $C$.

    Consider a sequence of $G$ turning it into a disjoint union of 2-clubs $G'$.
    We define the cover $C$ as follows.
    For every connected component of $G'$, we define the set of original vertices of each connected component.
    Thus, $C$ is a 2-club cover.
    Each edge deletion corresponds to an edge of cost 1 in the cover.
    The number of times a vertex is split is $|\{ S \in C | v\in S\}|-1$.
    So the length of the sequence is $cost(C)$.
\end{proof}

\begin{lemma} \label{lemma:2ccedvs_rec_multiple}
    Let $y$ be a cut vertex of a graph $G$ such that $A\cup y$ is a 2-club, where $A$ is a connected component in $G\setminus y$.
    By denoting $B = V(G) \setminus (A \cup y)$, we have $2ccedvs(G) = min( 1 + 2ccedvs(G-A), |N(y)\cap B| + 2ccedvs(G-y-A))$. 
\end{lemma}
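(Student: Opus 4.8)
The plan is to work through the \emph{2-club cover} characterization established just above, which equates $2ccedvs(G)$ with the minimum cost of a 2-club cover of $G$; this converts the sequence-of-operations bookkeeping into a cleaner counting argument on covers. Throughout I use that $y$ is a cut vertex separating $A$ from $B$, so every vertex of $A$ has all of its neighbours inside $A\cup\{y\}$, and (as in Lemma~\ref{lemma:2ccedvs_rec_simple}) that $A$ contains a vertex $a$ with $d(a,y)=2$. Since every path from $a$ to $B$ must pass through $y$, we get $d(a,b)=d(a,y)+d(y,b)\ge 3$ for each $b\in B$, so no 2-club can contain both $a$ and a vertex of $B$; this is the structural fact that forces a genuine separation and, as the path $x-y-z$ already shows, is what makes the lower bound true, so I state it as a standing assumption on $A$.

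For the upper bound I exhibit one cover per term. For the first term I take an optimal cover of $G-A$ and add the single cluster $A\cup\{y\}$; this is the cover-analogue of splitting $y$ once, keeping one copy with $A$ while leaving a copy on the $B$-side to be handled inside $G-A$. As $A\cup\{y\}$ is a 2-club and each vertex of $A$ now lies in exactly one cluster, the only extra charge is $+1$ on $y$, giving cost $1+2ccedvs(G-A)$. For the second term I take an optimal cover of $G-y-A=G[B]$ and again adjoin $A\cup\{y\}$; now $y$ lies in a single cluster so it is free, but each of the $|N(y)\cap B|$ edges from $y$ into $B$ becomes uncovered, for a total of $|N(y)\cap B|+2ccedvs(G-y-A)$. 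Taking the minimum gives $2ccedvs(G)\le\min(\,\cdot\,,\cdot\,)$.

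For the lower bound I start from an optimal cover $C$ and split on whether some cluster contains $y$ together with a vertex of $B$. If none does, then all $|N(y)\cap B|$ edges from $y$ to $B$ are uncovered; deleting $y$ and $A$ from every cluster yields a valid cover of $G[B]$, since the midpoint of any length-$2$ path between two $B$-vertices inside a cluster is either a $B$-vertex, or $y$ (excluded by the case), or a vertex of $A$ (impossible, as $A$-vertices are adjacent only within $A\cup\{y\}$), so distances among $B$-vertices are unchanged. Charging the $|N(y)\cap B|$ uncovered edges separately then gives $cost(C)\ge |N(y)\cap B|+2ccedvs(G-y-A)$. If instead some cluster contains $y$ and a $B$-vertex, I keep only the clusters meeting $B$, restrict them to $B\cup\{y\}$, and obtain a valid cover $\tilde C$ of $G-A$; a direct accounting shows $cost(C)-cost(\tilde C)$ is at least the number of \emph{pure} clusters (those contained in $A\cup\{y\}$) that contain $y$, plus the $A$-vertex and $A$-incident-edge charges of $C$.

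The crux is proving that this last quantity is at least $1$, which supplies exactly the extra $+1$ of the first term. Here the distance-$2$ vertex $a$ is essential: its cluster must lie inside $A\cup\{y\}$ and is therefore pure; tracing the covered geodesic $a,x,y$ then forces $y$ into a pure cluster unless some $A$-incident edge is uncovered, and in either situation the quantity is at least $1$. Combining the two cases yields $cost(C)\ge \min\bigl(1+2ccedvs(G-A),\,|N(y)\cap B|+2ccedvs(G-y-A)\bigr)$, and together with the matching upper bound the equality follows. I expect this Case-2 counting to be the main obstacle: the unavoidable unit of cost can hide as a duplicated copy of $y$, as a redundant singleton cluster $\{y\}$ in $\tilde C$, or as a cut $A$-edge, and it must be located carefully so that it is not silently double-counted against the optimal cover of $G-A$.
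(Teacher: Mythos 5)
Your proposal is correct and, at its core, follows the same route as the paper: both arguments funnel through the 2-club cover characterization and then split into two cases according to how $y$ is covered. The organizational difference is that the paper first proves a normalization statement (there is a minimum cover in which $A\cup\{y\}$ is one set and every other set lies inside $B\cup\{y\}$) and only then reads off the two bounds from $cost(y,C)=0$ versus $cost(y,C)\geq 1$, whereas you work with an arbitrary optimal cover and extract each bound directly, making the $+1$ accounting in your Case 2 explicit; the two case analyses are mirror images of each other (your ``pure clusters containing $y$ / charged $A$-vertices / uncovered $A$-incident edges'' are exactly the quantities that the paper's normalization trades against the added set $A\cup\{y\}$). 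One caveat on your crux: as stated, ``tracing the covered geodesic $a,x,y$ forces $y$ into a pure cluster unless some $A$-incident edge is uncovered'' misses a third alternative, namely that $a$ or the midpoint $x$ is multiply covered --- for instance $x$ could lie both in $a$'s pure cluster and in an impure cluster $\{x,y,b\}$ that covers the edge $x-y$, so that $y$ never enters a pure cluster. Since the quantity you bound from below does include the $A$-vertex charges, the argument still closes, but the final write-up must state the trichotomy (uncovered $A$-incident edge, duplicated $A$-vertex, or pure cluster containing $y$), not the dichotomy.

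The most valuable deviation is your ``standing assumption'' that $A$ contains a vertex at distance 2 from $y$: this is not cosmetic, because the lemma as stated in the paper is false without it. Take $G$ to be the path $a-y-z$ with $A=\{a\}$ and $B=\{z\}$: then $G$ itself is a 2-club, so $2ccedvs(G)=0$, while $\min(1+2ccedvs(G-A),\,|N(y)\cap B|+2ccedvs(G-y-A))=\min(1+0,\,1+0)=1$. The paper's own proof silently uses the missing hypothesis in the step claiming that if all vertices of $A$ and all edges of $A\cup\{y\}$ have cost 0, then $A\cup\{y\}$ ``must already be a set in $C$'' --- without a distance-2 vertex, the set containing $A$ may spill into $B$, exactly as in the example above. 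The hypothesis is stated explicitly in Lemma~\ref{lemma:2ccedvs_rec_simple} and holds in the application to trees (there $A$ is a star whose leaves are at distance 2 from $y$), so your added assumption is precisely the correction the statement needs.
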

\begin{proof}
Consider a sequence of modifications turning $G-A$ into a disjoint union of 2-clubs. Split $y$ in $G$ so that one copy is adjacent to $N(y) \cap A$ and one copy is adjacent to $N(y) \cap B$. Since this sequence turns $G$ into a disjoint union of 2-clubs, we have that $2ccedvs(G) \leq 1 + 2ccedvs(G-A)$.

Consider a sequence of modifications which turns $G-y-A$ into a disjoint union of 2-clubs.
Apply this sequence to $G$ and, then delete all edges $y-z$ where $z \in N(y)\cap B$.
Thus, this sequence turns $G$ into a disjoint union of 2-clubs with $|N(y)\cap B|$ additional operations (edge deletions).
Thus, $2ccedvs(G) \leq |N(y)\cap B| + 2ccedvs(G-y-A)$.

Consider a sequence of modifications turning $G$ into a disjoint union of 2-clubs.  
If an edge of $A \cup \{y\}$ is deleted or if a vertex in $A$ is split, we can replace this operation by splitting $y$, with one copy adjacent to $N(y) \cap A$ and the other to $N(y) \cap B$.  
In this case, the restriction of the sequence to the induced subgraph $G - A$ uses one less operation. Therefore, 
$$
2ccedvs(G) \geq 1 + 2ccedvs(G - A).
$$

Let $C$ be a 2-cover of $G$.  
We now prove that there exists a 2-club cover $C'$ of $G$ such that $A \cup \{y\}$ is a set of $C'$, all other sets are in $B \cup \{y\}$, and $\text{cost}(C') \leq \text{cost}(C)$.  
Let $S$ be a set of $C$.  
We claim that $G[S \cap (B \cup \{y\})]$ is a 2-club. Let $i \neq j \in S \cap (B \cup \{y\})$. Then $i$ or $j$ is in $B$.  
Assume without loss of generality that $i \in B$; then $N(i) \subseteq B \cup \{y\}$.  
Since $G[S]$ is a 2-club, there exists $k \in S \cap N[i] \cap N[j]$.  
Hence, $k \in B \cup \{y\}$ because $N(i) \subseteq B \cup \{y\}$.  
This implies $d(i,j) \leq 2$ in $S \cap (B \cup \{y\})$, so $S \cap (B \cup \{y\})$ is a 2-club.

We define $C'$ as follows: for every set $S$ intersecting $B \cup \{y\}$, we add the set $S' = S \cap (B \cup \{y\})$ to $C'$.  
We also add the set $A \cup \{y\}$. As a result, $C'$ is a 2-cover of $G$.

Let us show that $\text{cost}(C') \leq \text{cost}(C)$.  
The cost of vertices in $B$ remains unchanged because they belong to the same sets.  
Similarly, the cost of edges in $B \cup \{y\}$ has not increased, since the sets are restricted accordingly.  
Indeed, let $u-v$ be an edge of $G[B \cup \{y\}]$. If there exists a set $S \in C$ such that $\{u,v\} \subseteq S$, then $\{u,v\} \subseteq S \cap (B \cup \{y\})$.
If all vertices of $A$ and all edges in $A \cup \{y\}$ have cost 0 in $C$, then $A \cup \{y\}$ must already be a set in $C$: it is the set containing $A$.  
Hence, $\text{cost}(C') \leq \text{cost}(C)$ because the vertices of $A$ and edges of $A \cup \{y\}$ still have cost 0, and $y$ appears in fewer sets.

Otherwise, some vertex of $A$ or some edge in $A \cup \{y\}$ has cost at least 1 in $C$.  
This element now has cost 0 in $C'$. Furthermore, the cost of $y$ increases by at most 1 due to the addition of the set $A \cup \{y\}$.  
Consequently, $\text{cost}(C') \leq \text{cost}(C)$.

We conclude that there exists a minimum 2-club cover of $G$ such that $A \cup \{y\}$ is a set and all other sets are in $B \cup \{y\}$.  
Consider such a minimum 2-club cover $C$ of $G$. If $\text{cost}(y,C) = 0$, then $y$ appears in only one set of $C$.  
In that case, all edges $y-b$ with $b \in B$ are of cost 1, and the remaining sets lie in $B$.  
Therefore,
$$
2ccedvs(G) \geq |N(y) \cap B| + 2ccedvs(G - y - A).
$$
Otherwise, if $\text{cost}(y,C) \geq 1$, then 
$$
2ccedvs(G) \geq 1 + 2ccedvs(G - A).
$$

\end{proof}

\begin{theorem}
\label{theorem:2CCEDVS_trees}
    {\sc 2CCEDVS} is solvable on polynomial time in trees. 
\end{theorem}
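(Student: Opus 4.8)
The plan is to turn the two recurrences of Lemma~\ref{lemma:2ccedvs_rec_multiple} and Lemma~\ref{lemma:2ccedvs_rec_simple} into a bottom-up dynamic program. The enabling structural observation is that \emph{in a tree a 2-club is exactly a star}: an induced subtree of diameter at most $2$ is a $K_{1,t}$. Hence, if $T$ has at least two vertices, I would root it arbitrarily, let $T_v$ denote the subtree rooted at $v$, and pick a non-leaf vertex $p$ of maximum depth, so that every child of $p$ is a leaf. Writing $y$ for the parent of $p$ and $A=V(T_p)$, the set $A$ is a connected component of $T\setminus y$ and $A\cup\{y\}$ is a star, hence a 2-club. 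So the hypotheses of Lemma~\ref{lemma:2ccedvs_rec_multiple} are always met on a non-trivial tree; in the degenerate case where $y$ has a single neighbour outside $A$ (e.g.\ $\deg(y)=2$ and $p$ non-leaf, giving a vertex at distance $2$ from $y$ inside $A$), the sharper, non-branching Lemma~\ref{lemma:2ccedvs_rec_simple} applies instead.

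With such a pair $(y,A)$ in hand, I would invoke Lemma~\ref{lemma:2ccedvs_rec_multiple} to write, with $B=V(T)\setminus(A\cup\{y\})$,
\[
2ccedvs(T)=\min\bigl(1+2ccedvs(T-A),\ |N(y)\cap B|+2ccedvs(T-y-A)\bigr).
\]
Both right-hand instances are forests strictly smaller than $T$, and since the minimum cost of a 2-club cover is additive over connected components, $2ccedvs$ of a forest is the sum of $2ccedvs$ over its components. In the first branch $T-A$ is again a tree; in the second branch the components of $T-y-A$ are the remaining child-subtrees $T_{p'}$ of $y$ (each a pendant star, of cost $0$) together with the upper part $T\setminus V(T_y)$. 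Thus every recursive call is reduced to the same quantity evaluated on induced subtrees of $T$, and I would evaluate all of them by a single post-order pass.

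The step I expect to be the main obstacle is ensuring \emph{polynomial} running time despite the two-way branching of Lemma~\ref{lemma:2ccedvs_rec_multiple}: a naive unfolding of the recurrence can generate exponentially many induced subforests (roughly, $T$ with arbitrary families of pendant subtrees removed). I would overcome this by organizing the recursion as a dynamic program that processes the children of each vertex one at a time and, for each vertex $v$, stores only a constant number of states — essentially the optimal cost of covering $T_v$ when $v$'s cluster is still ``open'' to absorb $v$'s parent at no edge cost, versus ``closed'' (its parent edge will be deleted or $v$ will be split). This keeps the number of distinct subproblems linear in $n$, so each combined with the recurrence yields an $O(n^2)$-time algorithm as in Lemma~\ref{lemma:tree}. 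The delicate point, which I would prove using the 2-club cover exchange argument already carried out inside the proof of Lemma~\ref{lemma:2ccedvs_rec_multiple} (namely that there is always a minimum cover in which $A\cup\{y\}$ is its own set), is that restricting attention to this bounded family of states never loses optimality; establishing this correctness of the state reduction is the crux of the argument, after which the polynomial-time claim follows immediately.
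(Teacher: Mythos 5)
Your structural skeleton is sound and matches the paper's: in a tree every 2-club is an induced star, a deepest non-leaf vertex $p$ yields a pendant star $A=V(T_p)$ attached at a cut vertex $y$, Lemma~\ref{lemma:2ccedvs_rec_simple} or Lemma~\ref{lemma:2ccedvs_rec_multiple} then applies, and $2ccedvs$ is additive over components. You also correctly identify the real difficulty: naively unfolding the two-way recurrence of Lemma~\ref{lemma:2ccedvs_rec_multiple} can generate too many subforests. The gap is that your resolution of this difficulty is announced rather than carried out, and it is precisely where the content of the theorem lies. The ``open''/``closed'' states are never defined, no transitions are given, and you explicitly defer what you yourself call the crux (that restricting to these states loses no optimality). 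Moreover, as literally described, the interface is not sound. ``Open'' conflates two situations that have different costs and different compatibility constraints at the parent: (i) a star \emph{centered at} $v$ absorbs the parent as a leaf (always available, but it charges an extra cluster membership, i.e.\ a potential split, to the parent), and (ii) $v$ is a leaf of a star \emph{centered at} the parent (free for the parent, but only available if the parent's own decision creates such a star). Since which of the two is usable or cheaper depends on a choice made at the parent, a single ``open'' value cannot be propagated; one needs at least three interface states (parent edge covered from below, covered from above, or not covered). Likewise ``closed $=$ edge deleted or $v$ split'' is not a meaningful dichotomy: in the 2-club cover view a split vertex can perfectly well have a copy lying in its parent's cluster, so splitting $v$ does not force the parent edge to be uncovered. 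Finally, the exchange argument inside Lemma~\ref{lemma:2ccedvs_rec_multiple} concerns one fixed pendant star $A\cup\{y\}$; it does not show that an arbitrary already-processed subtree can be summarized by a constant amount of information, which is exactly the statement your dynamic program needs.

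For comparison, the paper sidesteps state design entirely: it fixes a postorder numbering that visits deepest branches first, so that every graph produced by the recurrences --- $T-A$, $T-y-A$, and the pendant subtrees $T[\phi(k),k]$ --- is again an interval $T[i,j]$ of consecutive labels (Figure~\ref{fig:tree_postorder}). Memoizing $t[i,j]=2ccedvs(T[i,j])$ over the $O(n^2)$ such intervals turns Lemmas~\ref{lemma:2ccedvs_rec_simple} and~\ref{lemma:2ccedvs_rec_multiple} directly into a polynomial-time algorithm, with no correctness argument needed beyond the lemmas themselves. Your per-vertex interface DP could also be made to work --- with the three states above, transitions verified against the 2-club cover characterization, it would even run in linear time --- but in its present form the decisive step of the proof is missing.
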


\begin{proof}
Consider a tree $T$ with $n$ vertices and a postorder numbering from $1$ to $n$ of the vertices such that the deepest branches are visited first.
    
For every vertex $x$, we denote by $\phi(x)$ the minimum descendant of $x$, that is, the descendant with the smallest postorder numbering.
For any $i$ and $j$ such that $j$ is an ascendant of $i$, we define $T[i,j]$ as the induced subgraph of $T$ from the vertices $\{i, i+1, \ldots, j\}$.
We define $t[i,j] = 2ccedvs(T[i,j])$.
Thus, $2ccedvs(T) = t[1,n]$.
See Figure~\ref{fig:tree_postorder} for an example.

    \begin{figure}[!htb] 
        \centering
    \begin{tikzpicture}
    [
        yscale=-1,
        node_style/.style={circle, draw, minimum size=0.5cm} 
    ]
    \def\fscale{5} 
    \node[node_style] (1) at ($\fscale*(-0.28, 0.28)$) {$1$};
	 \node[node_style] (2) at ($\fscale*(-0.39, 0.06)$) {$2$};
	 \node[node_style] (3) at ($\fscale*(-0.17, 0.06)$) {$3$};
	 \node[node_style] (4) at ($\fscale*(0.06, 0.06)$) {$4$};
	 \node[node_style] (5) at ($\fscale*(-0.17, -0.17)$) {$5$};
	 \node[node_style] (6) at ($\fscale*(0.28, 0.28)$) {$6$};
	 \node[node_style] (7) at ($\fscale*(0.5, 0.28)$) {$7$};
	 \node[node_style] (8) at ($\fscale*(0.39, 0.06)$) {$8$};
	 \node[node_style] (9) at ($\fscale*(0.28, -0.17)$) {$9$};
	 \node[node_style] (10) at ($\fscale*(0.06, -0.28)$) {$10$};
	 \node[node_style] (0) at ($\fscale*(-0.5, 0.28)$) {$0$};

	 \draw (0) to  (2);
	 \draw (2) to  (5);
	 \draw (5) to  (10);
	 \draw (10) to  (9);
	 \draw (9) to  (8);
	 \draw (8) to  (6);
	 \draw (8) to  (7);
	 \draw (4) to  (5);
	 \draw (3) to  (5);
	 \draw (1) to  (2);
\end{tikzpicture}

\caption{Example of a tree and its postorder numbering ordered by decreasing depth. For example the children of $8$ are $6$ and $7$ and $\phi(9) = 6$ (the minimum descendant of $9$).}
        \label{fig:tree_postorder}
    \end{figure}

Consider two integers $i,j$ in $\{1,1,\ldots, n\}$ where $i<j$ and, $j$ is an ascendant of $i$.
We denote by $x$ the parent of $i$ and by $y$ the parent of $j$.
As $x$ is the parent of $i$, $y$ is a cut vertex of $T[i,j]$ separating the vertices $A = \{i, \ldots, x\}$ from the rest of the tree.
As the numbering is in decreasing depth order, $T[i,x]$ is a star centered at $x$ (all the children of $x$ are leaves) and, $T[A]$ is a 2-club.
    
If $degree(y) = 2$ (See Figure~\ref{fig:2ccedvs_trees_algo_simple}) and $z$ is the other child of $y$, then by Lemma~\ref{lemma:2ccedvs_rec_simple}, we have
    \[ t[i,j] =
    \begin{cases}
        1 + t[y+1, j], & \text{if } y \not=j \\
        1 + t[x+1, z], & \text{otherwise}.
    \end{cases}
    \]

  \begin{figure}[!htb] 
        \centering
        \begin{subfigure}[b]{0.49\textwidth}
            \centering
\begin{tikzpicture}
    [
        yscale=-1,
        node_style/.style={circle, draw, minimum size=0.5cm} 
    ]
    \def\fscale{2} 
    
	 \node[node_style] (1) at ($\fscale*(0, 0.5)$) {};
	 \node[node_style] (2) at ($\fscale*(-0.25, 0)$) {$x$};
	 \node[node_style] (0) at ($\fscale*(-0.5, 0.5)$) {$i$};
	 \node[node_style] (3) at ($\fscale*(0, -0.5)$) {$y$};
	 \node[node_style] (4) at ($\fscale*(0.25, 0)$) {$z$};
	 \node[node_style, label={below:$x+1$}] (5) at ($\fscale*(0.5, 0.5)$) {};

	 \draw (0) to  (2);
	 \draw (1) to  (2);
	 \draw (2) to  (3);
	 \draw (3) to  (4);
	 \draw (4) to  (5);
\end{tikzpicture}

\caption{Case where $y=j$, so $y$ has no ascendant in $T[i,j]$ and $z$ is the other child of $y$.}
            \label{fig:2ccedvs_trees_algo_simple1}
        \end{subfigure}
        \hfill
        \begin{subfigure}[b]{0.49\textwidth}
            \centering
\begin{tikzpicture}
    [
        yscale=-1,
        node_style/.style={circle, draw, minimum size=0.5cm} 
    ]
    \def\fscale{3} 
    
	 \node[node_style] (1) at ($\fscale*(-0.17, 0.5)$) {};
	 \node[node_style, ] (2) at ($\fscale*(-0.33, 0.17)$) {$x$};
	 \node[node_style] (0) at ($\fscale*(-0.5, 0.5)$) {$i$};
	 \node[node_style] (3) at ($\fscale*(-0.17, -0.17)$) {$y$};
	 \node[node_style] (4) at ($\fscale*(0, -0.5)$) {$z$};
	 \node[node_style] (5) at ($\fscale*(0.17, -0.17)$) {};
	 \node[node_style] (6) at ($\fscale*(0.5, 0.17)$) {};
	 \node[node_style] (7) at ($\fscale*(0.17, 0.17)$) {};
	 \node[node_style, label={below:$y+1$}] (8) at ($\fscale*(0.17, 0.5)$) {};
	 \node[node_style] (9) at ($\fscale*(0.4, 0.5)$) {};

	 \draw (0) to  (2);
	 \draw (1) to  (2);
	 \draw (2) to  (3);
	 \draw (3) to  (4);
	 \draw (4) to  (5);
	 \draw (5) to  (6);
	 \draw (5) to  (7);
	 \draw (7) to  (8);
	 \draw (7) to  (9);
\end{tikzpicture}

\caption{Case where $y <j$, so $y$ has only $x$ as a child and $z$ is the parent of $y$.}
            \label{fig:2ccedvs_trees_algo_simple2}
        \end{subfigure}
        \caption{Example where $degree(y) = 2$. In this case, the edge $y-z$ can be deleted.}
        \label{fig:2ccedvs_trees_algo_simple}
    \end{figure}

If $degree(y) \geq 3$, then $2ccedvs(T[i,j]-A) = t[x+1, j]$, and $T-y-A$ is the disjoint union of $T[y+1,j]$ and $T[\phi(k), k]$ where $k$ is a child of $y$ (see Figure~\ref{figure:2ccedvs_trees_algo_multiple}). According to Lemma~\ref{lemma:2ccedvs_rec_multiple}, $2ccedvs(T[i,j]) = \min(1 + 2ccedvs(T-A), degree(y)-1 + 2ccedvs(T-y-A))$.
    Thus, 
    \[
    t[i,j] = \min(1 + t[x+1,j], degree(y)-1 + t[y+1,j] + \sum_{k \in children(y), k\not=x} t[\phi(k),k])
    \]
 
We obtain an $O(n^2)$ algorithm via a recursive approach.

\begin{figure}[!htb] 
\centering

\begin{tikzpicture}
    [
        yscale=-1,
        node_style/.style={circle, draw, minimum size=0.5cm} 
    ]
    \def\fscale{5} 
	 \node[node_style] (1) at ($\fscale*(-0.24, 0.39)$) {};
	 \node[node_style,] (2) at ($\fscale*(-0.37, 0.13)$) {$x$};
	 \node[node_style] (0) at ($\fscale*(-0.5, 0.39)$) {$i$};
	 \node[node_style] (3) at ($\fscale*(-0.24, -0.13)$) {$y$};
	 \node[node_style] (4) at ($\fscale*(0.02, -0.39)$) {};
	 \node[node_style] (5) at ($\fscale*(0.28, -0.13)$) {};
	 \node[node_style, label={below:$y+1$}] (7) at ($\fscale*(0.5, 0.13)$) {};
	 \node[node_style] (10) at ($\fscale*(-0.11, 0.13)$) {$k_1$};
	 \node[node_style] (8) at ($\fscale*(0.15, 0.13)$) {$k_2$};
	 \node[node_style, label={below:$\phi(k_1)$}] (9) at ($\fscale*(-0.11, 0.39)$) {};
	 \node[node_style] (12) at ($\fscale*(0.02, 0.39)$) {};
	 \node[node_style, label={below:$\phi(k_2)$}] (13) at ($\fscale*(0.15, 0.39)$) {};
	 \node[node_style] (14) at ($\fscale*(0.28, 0.39)$) {};

	 \draw (0) to  (2);
	 \draw (1) to  (2);
	 \draw (2) to  (3);
	 \draw (3) to  (4);
	 \draw (4) to  (5);
	 \draw (5) to  (7);
	 \draw (3) to  (10);
	 \draw (3) to  (8);
	 \draw (10) to  (9);
	 \draw (10) to  (12);
	 \draw (8) to  (13);
	 \draw (8) to  (14);
\end{tikzpicture}

\caption{Example where $degree(y) \geq 3$. In this case, either we delete the edges incident to $y$ except $x-y$ and or we split $y$ to make $\{i, \ldots, x, y\}$ a cluster. Both of these operations would be followed by a recursive call.}
        \label{figure:2ccedvs_trees_algo_multiple}
    \end{figure}

\end{proof}

\subsection{Relationship between 2ccvs and 2ccedvs.}

Observe that deleting an edge $u-v$ can be replaced by splitting both $u$ and $v$ so that the copy of $u$ has only $v$ in its neighborhood and vice-versa. Therefore:

\begin{theorem} For every graph $G$, $2ccedvs(G) \leq 2ccvs(G) \leq 2 \cdot 2ccedvs(G)$. 
\end{theorem}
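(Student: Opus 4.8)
The plan is to prove the two inequalities separately, noting that the left one is essentially definitional while the right one carries all the content. For $2ccedvs(G) \le 2ccvs(G)$, I would simply observe that a sequence consisting solely of vertex splits is a legal sequence for the more permissive problem, since $2ccedvs$ allows both edge deletions and splits. Hence an optimal $2ccvs$-solution, of length $2ccvs(G)$, is in particular a feasible $2ccedvs$-solution, and taking the minimum over the larger set of admissible sequences can only decrease the value. This yields $2ccedvs(G) \le 2ccvs(G)$ in one line.

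For the right inequality $2ccvs(G) \le 2 \cdot 2ccedvs(G)$, I would begin with an optimal $2ccedvs$-sequence that uses $d$ edge deletions and $s$ vertex splits, so that $d + s = 2ccedvs(G)$. The goal is to turn it into a pure-split sequence by replacing each edge deletion with the two-split gadget from the observation preceding the theorem: to delete $u-v$, split $u$ into a copy $u'$ whose only neighbor is $v$ and a copy $u''$ inheriting $N(u)\setminus\{v\}$, then split $v$ into $v'$ adjacent only to $u'$ and $v''$ inheriting the remaining neighbors. I would check that $u'-v'$ becomes an isolated $P_1$ (itself a $2$-club), while on the remaining ``main'' copies $u''$ and $v''$ the pair of splits reproduces exactly the effect of deleting $u-v$, leaving $u''$ and $v''$ non-adjacent and all their other incidences intact.

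I would then argue that carrying out this substitution throughout the sequence preserves correctness. Because each replacement reproduces the corresponding edge deletion on precisely the part of the graph that the later operations act upon, and only spins off isolated edges that are never touched again, the final graph equals the original disjoint union of $2$-clubs together with a few extra isolated $P_1$'s, which is again a disjoint union of $2$-clubs. The new all-splits sequence has length $s + 2d$, and since $s + 2d \le 2(s+d) = 2\cdot 2ccedvs(G)$ (using $s \ge 0$), we obtain $2ccvs(G) \le s + 2d \le 2\cdot 2ccedvs(G)$, completing the chain.

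The main obstacle I anticipate is not the arithmetic but the bookkeeping for how the simulating splits interact with the rest of the sequence: I must ensure that substituting the gadget for an edge deletion does not alter the neighborhoods on which subsequent splits and deletions operate, and that the spurious isolated edges can neither reconnect to other components nor introduce any pair of vertices at distance greater than $2$. Keeping the gadget strictly \emph{local} — letting $u'$ and $v'$ carry away only the single edge $u-v$ while all other incidences stay on $u''$ and $v''$ — is exactly what makes this interaction clean, so this verification is the one place the proof genuinely needs care rather than a direct appeal to the definitions.
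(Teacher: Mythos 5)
Your proposal is correct and follows essentially the same route as the paper, which proves the theorem via the single observation that deleting an edge $u-v$ can be simulated by splitting both $u$ and $v$ so that one copy of each retains only the other as neighbor, giving $2ccvs(G) \leq s + 2d \leq 2 \cdot 2ccedvs(G)$. Your additional bookkeeping — checking that the spun-off isolated edge is itself a $2$-club and that the main copies $u''$, $v''$ see exactly the post-deletion neighborhoods for all subsequent operations — is precisely the verification the paper leaves implicit.
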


Furthermore the above inequality is tight because $2ccedvs(P_5) = 1$ and $2ccvs(P_5) = 2$.
Moreover, although of interest by itself, the inequality implies that \TCCEDVS{} admits a factor-two approximation algorithm on classes of graphs where \TCCVS{} is solvable in polynomial-time.

\section{Hardness of Approximation}

Our objective in this section is to adapt the construction in Section \ref{sec:2CCVS} to reduce \textsc{MAX 3-SAT(4)} to {\sc 2CCVS}. \textsc{MAX 3-SAT(4)} is a variant of MAX 3-SAT where each variable appears at most four times in the given formula $\phi$.

We also add the following constraint: when a variable $V_i$ appears exactly two times positively and two times negatively, we suppose that the list of clauses in which $V_i$ appears, $C(V_i)_1, C(V_i)_2,C(V_i)_3,C(V_i)_4$, is ordered so that $V_i$ appears positively in $C(V_i)_1$ and $C(V_i)_3$ and negatively in $C(V_i)_2$ and $C(V_i)_4$.  
This constraint is added to ensure that each unsatisfied clause in $\phi$ causes an additional split in the construction.
In fact, we can observe that if the formula $\phi$ cannot be satisfied, then we can use an ``extra'' split in each clause gadget to obtain a solution. 
However, the inverse does not necessarily hold if there is a variable $V_i$ that occurs two times positively and two times negatively.

Indeed, by using $12+1$ splits in the variable cycle, we may be able to satisfy the four clauses where $v$ occurs. In the rest of this section, we show how to obtain a reduction from {\sc MAX 3-SAT(4)} to {\sc 2CCVS} to prove the below theorem.

\begin{theorem}\label{th:inapprox}
{\sc 2CCVS} is $\APX$-hard.
\end{theorem}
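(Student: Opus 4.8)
The plan is to establish an L-reduction from \textsc{MAX 3-SAT(4)} (which is $\APX$-hard) to \textsc{2CCVS}, reusing Construction~\ref{construction:2ccvs} applied to a formula $\phi$ that obeys the ordering convention stated above. Writing $M$ for the number of clauses and $s^*(\phi)$ for the maximum number of clauses simultaneously satisfiable, the heart of the argument is the exact identity
\[
2ccvs(G) \;=\; 11M + \bigl(M - s^*(\phi)\bigr) \;=\; 12M - s^*(\phi).
\]
Once this is proved the L-reduction is immediate: a uniformly random assignment satisfies at least $M/2$ clauses in expectation, so $s^*(\phi)\ge M/2$ and hence $2ccvs(G) = 12M - s^*(\phi) \le 12M \le 24\, s^*(\phi)$, which is the first L-reduction inequality with $\alpha = 24$. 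The second inequality will follow from the soundness direction with $\beta = 1$.

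For the upper bound $2ccvs(G) \le 12M - s^*(\phi)$, I would fix an assignment $\sigma$ satisfying $s^*(\phi)$ clauses and reuse the $(\Leftarrow)$ direction of Theorem~\ref{th:npc}: each variable cycle is broken into paths of length two using exactly $3a(V)$ splits (so $\sum_V 3a(V) = 9M$ in total), choosing the even/odd splitting pattern according to whether $V$ is true or false. For a clause satisfied by $\sigma$, two further splits detach the two non-selected literal-vertices, exactly as in Theorem~\ref{th:npc}. For an unsatisfied clause no literal-vertex can be cheaply isolated, and a direct check shows that one extra split (three in total) always resolves the clause gadget. Summing gives $9M + 2\,s^*(\phi) + 3\bigl(M - s^*(\phi)\bigr) = 12M - s^*(\phi)$ splits.

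The soundness direction is where the real work lies. Given any sequence of $\ell$ splits turning $G$ into a disjoint union of $2$-clubs, the counting of Theorem~\ref{th:npc} already forces at least $3a(V)$ splits in every variable cycle and at least $2$ splits in every clause gadget, for a baseline of $11M$. I would then charge the surplus $\ell - 11M$ to unsatisfied clauses: from the splitting sequence I extract a Boolean assignment (setting $V$ according to which of the two admissible cuts its cycle realizes), and argue that every clause not satisfied by this assignment must have consumed at least one split beyond the baseline $2$ inside its own gadget, these charges being disjoint. This produces an assignment satisfying at least $12M - \ell$ clauses, whence $s^*(\phi) \ge 12M - \ell$, i.e. $\ell \ge 12M - s^*(\phi)$, completing the identity. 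It also yields $\beta = 1$: a feasible solution of value $\ell$ maps to an assignment whose number of unsatisfied clauses is $M - c_A(g(y)) \le \ell - 2ccvs(G)$, so $|s^*(\phi) - c_A(g(y))| \le |\ell - 2ccvs(G)|$.

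The main obstacle is precisely the case flagged before the theorem: a variable $V$ occurring twice positively and twice negatively. Its two ``clean'' cuts satisfy either its two positive or its two negative occurrences, but a non-standard cut using one surplus split ($3a(V)+1 = 13$ splits on a $24$-cycle) might appear to satisfy all four of its clauses at once, which would let an unsatisfiable formula be solved strictly below $12M - s^*(\phi)$ and destroy the gap. The ordering convention ($V$ positive in $C(V)_1,C(V)_3$ and negative in $C(V)_2,C(V)_4$) is what rules this out: because the four literal-vertices are then placed at mutually interleaved positions around the cycle, I would show by a local interchange argument that any cut detaching all four from their cycle neighbours costs at least one split more than it saves relative to honestly paying the per-clause surplus, so no such shortcut exists and the disjoint charging of the previous paragraph is valid. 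Verifying this interchange argument over the finitely many occurrence patterns of a degree-four variable is the only delicate step; everything else is bookkeeping inherited from Theorem~\ref{th:npc}, and combining the two L-reduction inequalities with the $\APX$-hardness of \textsc{MAX 3-SAT(4)} yields the theorem.
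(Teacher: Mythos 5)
Your proposal follows essentially the same route as the paper's proof: a linear (L-)reduction from \textsc{MAX 3-SAT(4)} reusing Construction~\ref{construction:2ccvs} with the stated ordering convention, the same accounting ($9M$ splits for the cycles plus $2$ per satisfied and $3$ per unsatisfied clause, yielding optimum $12M - OPT(\phi)$ and $\beta = 1$), and the same reliance on the interleaved positions of the four literal vertices to rule out a $(12{+}1)$-split cycle resolving all four incident clauses --- a verification the paper itself also only sketches. The sole cosmetic difference is that you bound $M$ via the trivial random-assignment guarantee $s^*(\phi) \geq M/2$ (giving $\alpha = 24$), whereas the paper invokes H{\aa}stad's $\frac{7}{8}$ bound (giving $\alpha = \frac{96}{7}$); either constant suffices for \APX-hardness.
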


We will make use of linear reductions defined as follows:
\begin{definition}[Linear reduction \cite{PapadimitriouY91}]
    Let $A$ and $B$ be two optimization problems with cost functions $cost$ and $cost'$.
    We denote by $OPT(I)$ the optimal value of an optimization problem on an instance $I$.
    We say that $A$ has a linear reduction to $B$ if there exist two polynomial time algorithms $f$ and $g$ and two positive numbers $\alpha$ and $\beta$ such that for any instance $I$ of $A$:
    \begin{itemize}
        \item $f(I)$ is an instance of $B$.
        \item $OPT(f(I)) \leq \alpha OPT(I)$ .
        \item For any solution $S'$ of $f(I)$, $g(S')$ is a solution of $I$ and   $|cost(g(S'))- OPT(I)| \leq \beta |cost(S') - OPT(f(I))|$.
    \end{itemize}
\end{definition}

\begin{proof}
  
  Recall that in an optimal solution of {\sc MAX 3-SAT(4)}, at least $\frac{7}{8}$ of the clauses are satisfied~\cite{Hastad97}, yielding
  \begin{equation}
    OPT(\phi) \geq \frac{7M}{8}.\label{eq:7m by 8}
  \end{equation}

Let $f$ be the function transforming any instance $\phi$ of {\sc MAX 3-SAT(4)} into a graph $G$ with Construction~\ref{construction:2ccvs}.

We can find a sequence of splits in $G$ turning it into a disjoint union of 2-clubs with $3a(V)$ splits (by splitting one over two vertices of each variable cycle) for every variable cycle and $3$ splits for every clause (by splitting the three vertices of the clause triangle) Thus, 
\begin{align*}
OPT(G) &\leq 3M + 9M = 12 M \\
\text{As } M &\leq \frac{8}{7} OPT(\phi) \text{, we deduce that} \\
OPT(G) &\leq 12 \frac{8}{7} OPT(\phi) =\frac{96}{7} OPT(\phi)
\end{align*}

Consider an assignment of $\phi$ maximizing the number of satisfied clauses of $\phi$.
Let us define a sequence of operations on $G_\phi$.
For every true (resp., false) variable $v$, we split the vertices $v_{2i}$ (resp., $v_{2i+1}$) for every $i$.
Let $C$ be a clause.
If $C$ is satisfied, consider $v$ a variable satisfying $C$.
Split the vertices $u_C$ and $w_C$ (like in the proof of Theorem~\ref{theorem:2CCEDVS_NPC}) where $u$ and $w$ are the two other variables in $C$.
Otherwise, we split the three vertices $u_C$, $v_C$ and $w_C$.
We denote by $SC$ the set of satisfied clauses of $\phi$ by $\sigma$ and $UC$ the set of unsatisfied clauses of $\phi$ by $\sigma$.
This sequence is of length $\sum_{v \in V} 3 a(v) + \sum_{c \in SC} 2 + \sum_{c \in UC} 3 = 9M + 2M + k$ where $k$ is the number of unsatisfied clauses.
Furthermore, this sequence of operations turns the graph $G_\phi$ into a disjoint union of bicliques.
We deduce that $OPT(G_\phi) \leq 12M + M-OPT(\phi)$.
Thus, $OPT(\phi) + OPT(G_\phi) \leq 12M + M$.

Let $X$ be a sequence of operations turning the graph into a disjoint union $2$-clubs.
Let us define an assignment $g(X)$.
Consider a variable $v$.
According to the proof of Theorem~\ref{theorem:2CCEDVS_NPC}, the variable cycle of $v$ needs at least $3a(v) = 12$ operations on its edges and its vertices.

If the variable cycle is using exactly $12$ operations, then there are two cases.
If the vertices $v_{2i}$ are split for every $i$, then we set $v$ to true and otherwise to false.

If the variable cycle is using at least $12+2$ operations, then we replace these operations by splitting the vertices $v_{2i+1}$ for every $i$ and by splitting the vertices $v_1$ and $v_{13}$ so that it disconnects the variable cycle and the two positive clauses connected to $v$.
We set $v$ to false.

If the variable cycle is using at $12+1$ splits, then we show that it is not possible that it resolves all geodesics of the cycle with the clause edges incident to $v_1$, $v_8$, $v_{13}$ and $v_{19}$.
If $v$ is resolving $C_{v_1}$ and $C_{v_3}$, then we set $v$ to true.
Otherwise, we set $v$ to false.

As $cost(g(X))$ is the number of satisfied clauses of $g(X)$, we have $13M \leq cost(X) + cost(G(X))$ because for each unsatisfied clause, we can add only one split to satisfy it.
Therefore we have:

\begin{align*}
OPT(\phi) + OPT(G_\phi) &\leq 13M  \leq cost(X) + cost(g(X)) \\
OPT(\phi) - cost(g(X)) &\leq cost(X) - OPT(G_\phi)
\end{align*}

Thus, $|cost(g(X)) - OPT(\phi)| \leq |cost(X) - OPT(G)|$ because {\sc MAX-3SAT(4)} is a maximization problem while {\sc 2CCVS} is a minimization problem.
We have constructed a linear reduction with $\alpha=\frac{96}{7}, \beta=1$.
As MAX 3-SAT(4) is \APX-hard, 2CCVS is also \APX-hard.
\end{proof}

In the same way, we can prove:

\begin{theorem}
{\sc 2CCEDVS} is $\APX$-hard.
\end{theorem}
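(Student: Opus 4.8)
The plan is to mirror the $\APX$-hardness proof of Theorem~\ref{th:inapprox} for \TCCVS{}, but using Construction~\ref{construction:2ccedvs} (with a clause vertex $c$ attached to the three cycle vertices) instead of Construction~\ref{construction:2ccvs}. I would reduce from \textsc{MAX 3-SAT(4)} and reuse the same constant lower bound $OPT(\phi) \geq \frac{7M}{8}$ from H{\aa}stad's result. The forward direction of the linear reduction stays essentially identical: the composition function $f$ sends a formula $\phi$ to the graph $G$ produced by Construction~\ref{construction:2ccedvs}, and I would exhibit a canonical feasible solution that deletes every third edge of each variable cycle (turning it into $P_2$'s, costing $2a(V)$ per variable) and, for each clause, deletes two of the three clause--cycle edges (costing $2$ per clause). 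This yields $OPT(G) \leq 6M + 2M = 8M \leq 8 \cdot \frac{8}{7} OPT(\phi) = \frac{64}{7} OPT(\phi)$, establishing the first inequality of the linear reduction with $\alpha = \frac{64}{7}$.

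Next I would establish the cost-matching inequality. Taking an optimal assignment of $\phi$, I would construct a solution on $G$ that uses $6M$ deletions on the variable cycles plus, for each satisfied clause, $2$ deletions (keeping one true literal's edge), and for each unsatisfied clause one extra operation (a third deletion, or equivalently a split of $c$). This gives $OPT(G) \leq 8M + k$ where $k$ is the number of unsatisfied clauses, hence $OPT(\phi) + OPT(G) \leq 9M$. For the reverse, given any feasible solution $X$ of $G$, I would define the assignment $g(X)$ exactly as in the proof of Theorem~\ref{theorem:2CCEDVS_NPC}: by the analysis there, each variable cycle requires at least $2a(V)$ operations, and I read off the truth value from which set of every-third edges is deleted (handling the overspending cases by rerouting to a canonical deletion pattern). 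The key point is that each unsatisfied clause forces at least one additional operation beyond the $8M$ baseline, so $9M \leq cost(X) + cost(g(X))$. Combining, $OPT(\phi) - cost(g(X)) \leq cost(X) - OPT(G)$, and since \textsc{MAX 3-SAT(4)} is a maximization problem while \TCCVS{} and \TCCEDVS{} are minimizations, this yields $|cost(g(X)) - OPT(\phi)| \leq |cost(X) - OPT(G)|$, i.e. $\beta = 1$.

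The main obstacle will be the same subtlety flagged in the preamble to Theorem~\ref{th:inapprox}: the edge-deletion-with-splitting setting must be shown not to admit a ``cheaper'' resolution that satisfies a clause without paying the extra operation when the honest assignment leaves it unsatisfied. In particular, the analog of the $12+1$ case must be handled --- I would argue, using the ordering constraint imposed on variables appearing twice positively and twice negatively, that one cannot simultaneously resolve all the cycle geodesics and spare an edge to every incident clause vertex $c$ with only one extra operation in a way that satisfies more clauses than the assignment does. The argument that splitting $c$ is never strictly better than two edge deletions (unless all three literals are true) was already carried out in the proof of Theorem~\ref{theorem:2CCEDVS_NPC}, so I would invoke that fact directly. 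Since the construction and the accounting are structurally the same --- only replacing the clause clique by a clause vertex and recomputing the constants --- the result follows by the identical linear-reduction bookkeeping, and I would conclude that \TCCEDVS{} is $\APX$-hard.
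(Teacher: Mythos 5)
Your proposal is correct and takes precisely the route the paper intends: the paper's own ``proof'' of this theorem is the remark that it follows ``in the same way'' as Theorem~\ref{th:inapprox}, i.e., by the identical linear reduction from \textsc{MAX 3-SAT(4)}, applied to Construction~\ref{construction:2ccedvs}, with the accounting you give ($6M$ for the variable cycles, $2$ per satisfied clause, one extra operation per unsatisfied clause, and $\beta=1$), including the same hand-off of the degenerate ``$12+1$'' case to the ordering constraint. The one slip is that your bound $OPT(G)\leq 8M$ is not unconditional --- for an unsatisfiable formula the canonical solution must pay the third deletion on every falsified clause, exactly as your own refined bound $OPT(G)\leq 8M+k$ records --- so the first inequality of the linear reduction should read $OPT(G)\leq 9M\leq \frac{72}{7}\,OPT(\phi)$, which changes only the constant $\alpha$ and is harmless for $\APX$-hardness.
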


\section{The Parameterized Complexity of \TCCEDVS{} and \TCCVS{}}

As observed for {\sc Cluster Editing with Vertex Splitting} in \cite{abukhzam2023cluster}, all edge deletions can be performed before vertex splits.
Thus, we assume that any sequence of operations is equivalent to a sequence of operations where the splitting is performed at the end. 
Our proof is based on branching on paths of length three whose endpoints are at distance exactly three from each other.
 
In the case/branch where a vertex 
is to be split, we simply mark it for splitting and perform this operation at the end, until no such length-three paths exist. This ``charge and reduce'' approach is explained in more detail in the sequel.

\begin{lemma}\label{lemma:allornothing}
Consider a minimal sequence of edge deletions and vertex splits.
Let $S$ be the set of the split vertices. If $v \in S$ and $C$ is a connected component of $G[ V(G) \setminus S]$, then each copy of $v$ is either adjacent to all the vertices of $C\cap N(v)$ or to none of them.

\end{lemma}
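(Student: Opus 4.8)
The plan is to argue by an exchange argument on a minimal sequence, which I read as a sequence of minimum length and, among those, one that uses the fewest vertex splits (a pure minimum-length sequence may realize the optimal cost while performing a wasteful split, so this secondary tie-break is what the statement really needs). Throughout I rely on two elementary consequences of minimality for the final graph $H$, the disjoint union of $2$-clubs produced by the sequence. First, every deleted edge must have its two endpoints in distinct connected components of $H$: otherwise re-inserting that edge keeps every component a $2$-club, since adding an edge never increases distances, and removes one operation, contradicting minimality. Second, no two copies of a single original vertex lie in the same component of $H$: otherwise identifying them yields a $2$-club, distances only shrinking again, and saves one split. I will also use the deferred-splitting normalization stated just before the lemma, so that I may speak of the \emph{kept} edges of $G$ (those not deleted) and of the assignment of each original edge $v\text{-}u$ to the copies of $v$ that inherit it.

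The first substantive step is a propagation claim: adjacency between the copies of $v$ and the set $C\cap N(v)$ is constant along kept edges inside $C$. Concretely, suppose $u,u'\in C$ with $u\text{-}u'$ a kept edge and suppose a copy $v_i$ is adjacent to $u$. Since $u$ and $u'$ are not split and the edge between them survives, they lie in the same component $K_i$ of $H$, and $K_i$ contains $v_i$. If moreover $u'\in N(v)$, then the edge $v\text{-}u'$ cannot have been deleted, because its relevant endpoints ($u'$ and the copy $v_i$) both lie in $K_i$, so such a deletion would be redundant; hence, by the second minimality fact, $u'$ is adjacent to the unique copy of $v$ in $K_i$, namely $v_i$. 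Iterating along kept paths, all vertices of $C\cap N(v)$ lying in a common \emph{kept-component} of $G[C]$ (a connected component obtained after removing the deleted internal edges of $C$) attach to the same copy of $v$ or are all detached from $v$. Consequently, the all-or-nothing property can fail only when $C$ is fragmented into several components of $H$ by deleted internal edges, with vertices of $C\cap N(v)$ on different fragments attached to different copies of $v$ (or some attached and some detached).

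It remains to rule out such a fragmented configuration, and this is the technical heart. The idea is an exchange: given a fragment whose $N(v)$-vertices attach to a copy $v_i$ (or are detached), re-route so that $C\cap N(v)$ interacts with a single copy of $v$, paying for each detachment by deleting the corresponding edge $v\text{-}u$ while recovering at least as many operations by eliminating a now-superfluous copy of $v$ or a now-redundant internal deletion inside $C$. The delicate bookkeeping is to show that this re-routing never increases the total number of operations while strictly decreasing the number of splits: a single split can separate a large neighborhood, so it cannot be traded for one edge deletion in general, and one must instead charge each extra copy of $v$ that touches $C$ against a deleted internal edge of $C$ that its fragment necessitates. Establishing this charging — equivalently, exhibiting the right potential (minimum length, then fewest splits) under which every fragmented configuration is strictly improvable — is the main obstacle I expect; once it is in place, minimality forces all of $C\cap N(v)$ onto a single copy of $v$, which is exactly the all-or-nothing conclusion.
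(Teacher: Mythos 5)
Your proposal does not reach a proof: by your own account, the case you call the ``technical heart'' (a component $C$ of $G[V(G)\setminus S]$ fragmented by deleted internal edges, with vertices of $C\cap N(v)$ on different fragments attached to different copies of $v$) is left unresolved, and the charging/exchange argument you gesture at is never exhibited. What you do establish --- the two minimality facts (a deleted edge whose relevant endpoints end up in one club is redundant; two copies of one vertex in one club can be merged, saving a split) and the propagation of attachment along kept edges of $C$ --- is essentially the paper's entire proof: the paper supposes a copy $u$ adjacent to $x$ but not to $y$, notes that some copy $u'$ is adjacent to $y$, uses connectivity of $C$ to place $u$ and $u'$ in the same final club, and derives a contradiction with minimality by merging the two copies. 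So the part you completed is all the paper does; the part you could not complete is the only thing separating your write-up from a proof.

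The reason the paper does not (and need not) fight the fragmented case lies in how the lemma is read. Because of the deferred-splitting normalization stated immediately before the lemma --- which you yourself cite --- the sequence performs all deletions first and all splits last, and the lemma is applied (see the calls $Aux(S, G[V(G)\setminus S],\cdot)$ in Algorithm 1) to the graph in force at splitting time, i.e., after the deletions. Under that reading, $C$ is a component of the post-deletion graph restricted to unsplit vertices; splits never remove an edge between two unsplit vertices, so $C$ remains connected in the final solution, every vertex of $C\cap N(v)$ is adjacent to some copy of $v$ (a split satisfies $N(v)=N(v_1)\cup N(v_2)$), and your propagation argument closes the proof with no residual case. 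Your difficulty is an artifact of instead taking $C$ in the original graph, where deletions can indeed fragment it; note that under that literal reading the paper's own step ``$x$ and $y$ are connected in $G[V(G)\setminus S]$, hence $u$ and $u'$ lie in the same club'' is unjustified for exactly the reason you identified, so your instinct exposed a real imprecision even though it stalled your proof. Two smaller remarks: your secondary tie-break (fewest splits among minimum-length sequences) is unnecessary, since merging two copies in one club already shortens the sequence, so minimum length alone suffices; and it is not harmless, because the lemma quantifies over all minimal sequences, and proving it only for tie-broken ones yields a strictly weaker statement.
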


\begin{proof}
Suppose there exists a copy $u$ of $v$ and $x,y$ two neighbors of $v$ in $C$ such that $u$ is adjacent to $x$ and not to $y$. As $y$ is adjacent to $v$, then there exists another copy $u'$ of $v$ such that $u'$ is adjacent to $y$. As $x$ and $y$ are connected in $G[V(G) \setminus S]$, then $u$ and $u'$ belong to the same connected component of the resulting solution, which is impossible since any two copies of a split vertex must belong to two different clubs, unless the solution is not minimal, a contradiction.

\end{proof}

\begin{lemma}
Let $G=(V,E)$ be a connected graph and assume a sequence of at most $k$ edge deletions and splits is applied to $G$. 
If $S$ is the set of split vertices, then $G[V \setminus S]$ has at most $k+1$ connected components.
\end{lemma}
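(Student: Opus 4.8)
The plan is to induct on the length $m \le k$ of the applied sequence, proving the stronger invariant that after the first $m$ operations the graph $G[V \setminus S_m]$, obtained by deleting from $G$ the vertices split so far, has at most $m+1$ connected components. Since the full sequence yields $S_m = S$ with $m \le k$, this implies the claim. For the base case $m = 0$ we have $S_0 = \emptyset$, so $G[V \setminus S_0] = G$ is connected by hypothesis and has exactly one component, matching $0 + 1$.

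For the inductive step I would distinguish the two operation types. An edge deletion does not change the split set, so $S_m = S_{m-1}$ and $G[V \setminus S_m] = G[V \setminus S_{m-1}]$; the component count is unchanged and the bound is preserved with room to spare. Splitting a vertex that already lies in $S_{m-1}$ likewise leaves $G[V \setminus S_m]$ unchanged. The only operation that can increase the number of components is the \emph{first} split of a vertex $v$, where $G[V \setminus S_m] = G[V \setminus S_{m-1}] - v$; for this case it suffices to show that deleting $v$ creates at most one new component, or more precisely to charge any newly created components against the split operations performed on $v$.

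This charging is the crux, since removing a single vertex from a graph can a priori shatter it into as many components as its degree. The idea is to account new components to split \emph{operations} rather than to split \emph{vertices}. I would invoke Lemma~\ref{lemma:allornothing}, which (for a minimal sequence) forces each copy of a split vertex to be adjacent to all or none of the vertices of a given component of $G[V \setminus S]$ in its neighborhood; hence the neighbors of $v$ that genuinely end up separated correspond to distinct clubs, and therefore to distinct copies of $v$. Combined with the reduction stated at the start of the section, that all edge deletions may be performed before any split, this shows that the pieces produced around $v$ are exactly those the splits of $v$ must separate, so a vertex split into $r$ copies is responsible for at most $r-1$ extra components. Summing this amortized accounting over all split operations, together with the single component of the connected graph $G$, yields at most $k+1$ components.

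The main obstacle is precisely this amortized bookkeeping: one must rule out that a single split of $v$ isolates several neighbors at once without a matching number of split operations, which is exactly where minimality and Lemma~\ref{lemma:allornothing} are used. The delicate subcase is when two components of $G[V \setminus S]$ are tied to the same copy of a split vertex, and one must verify that such merging never forces the total count above $k+1$.
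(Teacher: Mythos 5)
Your proof has a genuine gap, and it sits exactly where you placed your ``delicate subcase.'' The amortized charging claim --- that a vertex split into $r$ copies can be responsible for at most $r-1$ new components of $G[V\setminus S]$ --- is false, and Lemma~\ref{lemma:allornothing} cannot rescue it: that lemma only forces each copy to take \emph{all or none} of the neighbors lying in a given component of $G[V\setminus S]$; it does not prevent many distinct components from attaching to the \emph{same} copy. Concretely, let $G$ be a double star: centers $c_1,c_2$ joined by an edge, $c_1$ adjacent to leaves $a_1,\dots,a_n$ and $c_2$ adjacent to leaves $b_1,\dots,b_m$. Splitting $c_1$ once, into a copy adjacent to $a_1,\dots,a_n$ and a copy adjacent only to $c_2$, turns $G$ into a disjoint union of two 2-clubs; this sequence is minimal (even optimal), so $k=1$ and $S=\{c_1\}$, yet $G[V\setminus S]$ has $n+1$ components. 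A single split created $n$ new components, all hanging off one copy, with the all-or-nothing property fully respected. So the bookkeeping you hope for does not exist: the step you deferred is not a verification detail but a counterexample, and it shows the lemma as literally stated fails even under minimality and even when the outcome is a disjoint union of 2-clubs.

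It is worth seeing how the paper itself argues, because it takes an entirely different and much simpler route: it tracks the \emph{evolving} graph (the graph obtained after each operation) and observes that one edge deletion or one split can increase \emph{its} number of connected components by at most one, whence the graph produced at the end of the sequence has at most $k+1$ components. That potential-function argument is correct for the post-operation graph, but it does not transfer to $G[V\setminus S]$: passing from the final graph to $G[V\setminus S]$ amounts to deleting the copies of the split vertices, and vertex deletion can multiply the number of components --- precisely the shattering phenomenon you identified at the outset. In short, your instinct located a real difficulty that the paper's one-line proof glosses over; the statement that the paper's argument actually establishes is the bound on the components of the resulting graph, not of $G[V\setminus S]$, and the double-star example above separates the two.
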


\begin{proof}
The statement simply follows from the fact that the graph is initially connected and a single edge deletion operation or a single split operation can only increase its number of connected components by at most 1.

\end{proof}

We now show how to perform the vertex splits at the end, after marking the vertices to be split and exhausting all the possible edge deletion operations. Given a split set $S = \{v_1, \ldots, v_s\}$, $p$ connected components $C=\{C_1, \ldots, C_p\}$, and a number of possible extra splits $e$.
We consider the algorithm $Aux(S, C, e)$ for trying all sequences of splits on $S$ of length at most $s+e$ such that each vertex of $S$ is split at least once.

At the end of such a sequence, the split set is of size at most $2s+e$ (since we create at most $s+e$ copies).
A split is a choice of a vertex in the current split set and one subset of the vertices for each of the two copies.
There are at most $2s+e$ choices to select a vertex that will be split.
Choosing a set of neighbors for both copies of the split corresponds to selecting two subsets of the current split set and of the connected components (see Lemma \ref{lemma:allornothing}).
There are at most $2^{2s+e+p}$ choices for the set of neighbors of one copy.
Therefore, there are at most $((2s+e) 2^{4s+2e+2p} )^{s+e}$ sequences of such splits.

\begin{theorem}
    The complexity of Algorithm \textsc{Aux} is 
    in $O((3k 2^{8k})^{2k})$ if $p \leq k+1$, $e+s \leq k$.
\end{theorem}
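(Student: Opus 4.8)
The plan is to bound the running time of Algorithm \textsc{Aux} by combining the counting bound on the number of split sequences, which was already established in the preceding paragraph, with the stated parameter constraints $p \leq k+1$, $e+s \leq k$, and then verify that each individual sequence can be processed in the claimed time. First I would recall the bound just derived: the number of sequences of splits to enumerate is at most $((2s+e)2^{4s+2e+2p})^{s+e}$. The whole argument is therefore a matter of substituting the inequalities on $s$, $e$, and $p$ into this expression and simplifying.

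The key substitution proceeds as follows. Since $e+s \leq k$, the exponent $s+e$ is at most $k$, and inside the base we have $2s+e \leq s + (s+e) \leq 2k$ (more crudely, $2s+e \leq 2(s+e) \leq 2k$), so that $2s+e \leq 2k \leq 3k$ absorbs easily into the claimed factor $3k$. For the exponential factor, I would bound $4s+2e+2p \leq 4(s+e) + 2p \leq 4k + 2(k+1) = 6k+2$, using $s+e \leq k$ and $p \leq k+1$. This gives each base term bounded by $(3k)\,2^{6k+2}$, and raising to the power $s+e \leq k$ yields at most $\bigl((3k)\,2^{6k+2}\bigr)^{k}$. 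The slack between my $2^{6k+2}$ and the stated $2^{8k}$ is comfortably positive for all $k \geq 1$ (indeed $6k+2 \leq 8k$ whenever $k \geq 1$), so the bound $\bigl(3k\,2^{8k}\bigr)^{k}$ certainly holds, and the paper's stated $(3k\,2^{8k})^{2k}$ is an even more generous upper bound that also accounts for the polynomial per-sequence verification cost.

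The remaining ingredient is the cost of processing a single enumerated sequence: applying the at most $s+e \leq k$ splits, reconstructing the resulting graph, and checking whether every connected component is a $2$-club (a test computable in polynomial time by all-pairs distance computation on a graph whose size is polynomial in the original input). This verification is polynomial in the graph size, and folding it into the enumeration introduces only a polynomial multiplicative factor, which is dominated when one passes from the exponent $k$ to $2k$ in the final expression. Thus I would conclude $\bigl(3k\,2^{8k}\bigr)^{2k}$ bounds the total work.

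I do not anticipate a genuine obstacle here, since the statement is essentially a bookkeeping consequence of the counting estimate established immediately before it; the only point requiring minor care is making every loose inequality explicit, in particular confirming that the inflation of the exponent from $6k+2$ to $8k$ and from $k$ to $2k$ leaves enough room to absorb both the constant factors hidden in $2s+e \leq 3k$ and the polynomial per-sequence verification cost. I would state these intermediate inequalities cleanly rather than leaving them implicit, so that the passage from the raw count $((2s+e)2^{4s+2e+2p})^{s+e}$ to the advertised $O\bigl((3k\,2^{8k})^{2k}\bigr)$ is transparent.
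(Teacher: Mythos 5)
Your proposal is correct on the substance and follows essentially the same route as the paper: the paper gives no separate proof of this theorem, treating it as an immediate consequence of the counting bound $((2s+e)\,2^{4s+2e+2p})^{s+e}$ established just before it, and your substitutions ($2s+e \leq 2(s+e) \leq 2k \leq 3k$, $4s+2e+2p \leq 4k+2(k+1) = 6k+2 \leq 8k$, exponent $s+e \leq k \leq 2k$) are exactly the intended bookkeeping. One caution on your final paragraph: the claim that the polynomial per-sequence verification cost is ``dominated when one passes from the exponent $k$ to $2k$'' is not literally valid, since that slack is a function of $k$ alone and cannot absorb a factor polynomial in $n$, which is independent of $k$; the paper itself sidesteps this by counting only the enumerated sequences here and charging the polynomial work separately in the final running-time theorem, $O(n^3 5^k (3k\,2^{8k})^{2k})$. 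If you want your version to be airtight, either state the theorem as a bound on the number of sequences enumerated, or carry an explicit $\mathrm{poly}(n)$ factor alongside the $(3k\,2^{8k})^{2k}$ term rather than claiming it is absorbed.
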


We now describe the main algorithm, which returns true if there exists a sequence of length at most $k$ of edge deletions or vertex splits on a graph which results in a disjoint union of $2$-clubs and returns false otherwise.
In the following, $G=(V,E)$ is a graph, $S\subseteq V$ is the set of vertices that were marked for splitting, and $k$ is the number of allowed splits.

\begin{algorithm}[!htb]
\caption{$f(G,S,k)$}
\label{2CCEDVSalg}
\begin{algorithmic}[1]
\If{$k=0$}
    \State \Return $Aux(S, G[V(G) \setminus S], 0)$
\ElsIf{there exists $x,y$ such that $d(x,y) = 3$ and $u$ and $v$ are interior vertices of a length 3 path between $x$ and $y$ that are not in $S$} 
    \If{$f(G - xu,S, k-1)$}
         \Return true
    \EndIf
    \If{$f(G - uv,S, k-1)$}
         \Return true
    \EndIf
    \If{$f(G -vy,S, k-1)$}
         \Return true
    \EndIf
    \If{$f(G,S \cup u, k-1)$}  \Return true \EndIf
    \If{$f(G,S \cup v, k-1)$}
         \Return true
    \EndIf
\Else 
    \State \Return $Aux(S, G[V(G)\setminus S], k)$

\EndIf

\end{algorithmic}
\end{algorithm}

The algorithm branches to 5 sub-problems and each step is taking $O(n^3)$ to search for a pair of vertices at distance $3$ from each other. Therefore, we deduce:

\begin{theorem}
Algorithm \ref{2CCEDVSalg} 
solves {\sc 2CCEDVS} in $O(n^3 5^k (3k 2^{8k})^{2k})$. Hence, the problem
is fixed-parameter tractable with respect to the total number of edge deletions and vertex splits.
\end{theorem}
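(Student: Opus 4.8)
The plan is to establish the running-time bound by a direct accounting of the recursion tree of Algorithm~\ref{2CCEDVSalg} together with the cost paid at its leaves by the auxiliary routine \textsc{Aux}. First I would observe that the recursion is driven by a single branching rule: whenever there remain two vertices $x,y$ at distance exactly $3$ and interior vertices $u,v$ of a connecting length-$3$ path that are not yet marked for splitting, we branch into $5$ cases, namely the three edge deletions $xu$, $uv$, $vy$, and the two markings $S \cup \{u\}$ and $S \cup \{v\}$. Each of these branches strictly decreases the budget parameter from $k$ to $k-1$, so the depth of the recursion tree is at most $k$ and its total number of nodes is $O(5^k)$. The correctness of restricting attention to such paths follows from the earlier observation that any length-$3$ path realizing distance $3$ must be destroyed in any valid solution, either by deleting one of its three edges or by splitting one of its two interior vertices; this is exactly what Lemma~\ref{lemma:allornothing} and the ``edge deletions first, splits last'' normalization justify.

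Next I would account for the per-node work and the leaf cost. At each internal node we must search for a witnessing pair $x,y$ with $d(x,y)=3$; computing all-pairs distances up to $3$ and scanning the interior of a shortest path costs $O(n^3)$, which is the factor multiplying the branching. The recursion terminates either when the budget $k$ reaches $0$ or when no such conflicting path exists; in both terminal cases the algorithm invokes $\textsc{Aux}(S, G[V(G)\setminus S], e)$ with $e \le k$. Here I would invoke the preceding theorem bounding the complexity of \textsc{Aux} by $O\!\bigl((3k\,2^{8k})^{2k}\bigr)$, valid because at any leaf we have $p \le k+1$ connected components and $s+e \le k$, the two hypotheses of that theorem, which in turn follow from the two component-counting and budget lemmas already proved.

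Combining these, the total running time is the number of recursion-tree nodes times the per-node search cost, plus the number of leaves times the \textsc{Aux} cost. Since both the node count and the leaf count are $O(5^k)$, and each contributes either the $O(n^3)$ search or the $O((3k\,2^{8k})^{2k})$ auxiliary cost, the product gives the claimed bound $O\!\bigl(n^3\,5^k\,(3k\,2^{8k})^{2k}\bigr)$. Because every factor depending on $k$ is separated from the input size $n$ (which appears only polynomially), this is an \FPT{} running time in the parameter $k = $ the total number of edge deletions and vertex splits, establishing fixed-parameter tractability.

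The main obstacle I expect is not the arithmetic of multiplying the factors, but arguing rigorously that the branching rule is \emph{exhaustive}: that every optimal solution of size at most $k$ is discovered along some root-to-leaf path. The delicate point is that when the algorithm reaches the ``else'' leaf with no remaining distance-$3$ conflict, the marked set $S$ together with the connected components of $G[V\setminus S]$ may still require genuine splitting work, and one must confirm via Lemma~\ref{lemma:allornothing} that \textsc{Aux}, restricted to splitting each marked vertex while respecting the all-or-nothing adjacency to each component, actually realizes the optimum rather than merely an upper bound. Verifying that the ``extra splits'' budget $e$ passed to \textsc{Aux} correctly captures splits that a solution may perform beyond the mandatory one-per-marked-vertex is the subtle bookkeeping step that ties the correctness argument to the running-time claim.
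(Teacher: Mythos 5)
Your proposal is correct and follows essentially the same route as the paper, whose own proof is merely the observation that the algorithm branches five ways with an $O(n^3)$ distance-$3$ search per node and then invokes \textsc{Aux} under the hypotheses $p \le k+1$, $s+e \le k$. The correctness subtleties you flag (exhaustiveness of the branching and that \textsc{Aux} realizes the optimum) are in fact left implicit by the paper as well, so your accounting is, if anything, more explicit than the published argument.
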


Note that a geodesic of length three is also an obstruction in the case of the {\sc 2CCVS} problem. Therefore, we can adapt the previous algorithm by removing the branchings where we delete edges (just remove lines 4-9 of Algorithm \ref{2CCEDVSalg}). We conclude that:

\begin{corollary}
{\sc 2CCVS} is
fixed-parameter tractable 
with respect to the solution size.
\end{corollary}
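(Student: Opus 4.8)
The plan is to obtain the corollary as a direct specialization of the main \textsc{2CCEDVS} algorithm, restricting the set of allowed operations to vertex splits only. First I would observe that for \textsc{2CCVS} the only editing operation available is the vertex split, so the correctness argument needs a single structural fact: a geodesic of length three whose two interior vertices are not yet marked for splitting is an \emph{obstruction} that must be resolved, and with splits the only way to resolve it is to split one of its two interior vertices. Indeed, if $x,y$ are at distance exactly three and $u,v$ are the interior vertices of a length-three $x$--$u$--$v$--$y$ path, then leaving both $u$ and $v$ intact keeps $x$ and $y$ in a common $2$-club component at distance three, which is forbidden; since edge deletion is not permitted here, at least one of $u,v$ must eventually be split. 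This is exactly why lines~4--9 of Algorithm~\ref{2CCEDVSalg} (the three edge-deletion branches) may be dropped: the two remaining branches $f(G, S\cup u, k-1)$ and $f(G, S\cup v, k-1)$ already cover every way to eliminate the obstruction.

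Next I would argue that the reduced branching is complete and that the base cases are unchanged. The recursion marks a vertex for splitting and decrements $k$, and when either $k=0$ or no length-three geodesic with unmarked interior vertices remains, it calls $\textsc{Aux}$ to try all ways of realizing the marked splits. Since Lemma~\ref{lemma:allornothing} and the connected-component bound were stated for arbitrary minimal sequences of edge deletions and vertex splits, they apply verbatim to the split-only setting (a split-only sequence is a special case), so $\textsc{Aux}$ correctly enumerates the at-most-$k+1$ components and their neighbor assignments. Thus after exhausting the branching we are left exactly with the task of checking whether the marked split set can be realized as a disjoint union of $2$-clubs, which $\textsc{Aux}$ decides.

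For the running time, the analysis mirrors the \textsc{2CCEDVS} bound but with a smaller branching factor: instead of five branches per recursive step there are only two, each finding a length-three geodesic in $O(n^3)$ time, giving $O(n^3\, 2^k\, (3k\,2^{8k})^{2k})$ overall, which is of the form $f(k)\cdot n^{O(1)}$. Hence \textsc{2CCVS} is fixed-parameter tractable with respect to the solution size.

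The step I expect to require the most care is the completeness of the two-branch recursion, that is, verifying that \emph{every} optimal split-only solution corresponds to a leaf of the reduced search tree. The subtlety is that a single split of $u$ might simultaneously resolve several length-three geodesics, so one must confirm that branching on a single obstruction at a time, and marking rather than immediately performing the split, never misses a solution; this follows because any optimal solution splits one of $u,v$ for each obstruction it must destroy, and the branching explores both choices, while $\textsc{Aux}$ at the leaves recovers the precise neighbor partition of the marked vertices. Everything else is a routine restriction of the already-established \textsc{2CCEDVS} argument.
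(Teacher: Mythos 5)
Your proposal is correct and follows essentially the same route as the paper: the paper's own proof of this corollary consists precisely of observing that a length-three geodesic remains an obstruction for \textsc{2CCVS} and then deleting the edge-deletion branches (lines 4--9) of Algorithm~\ref{2CCEDVSalg}, keeping only the two split branches and the calls to \textsc{Aux}. Your additional justification of completeness (that one of the two interior vertices $u,v$ must be split, and that Lemma~\ref{lemma:allornothing} and the component bound carry over to split-only sequences) and the resulting $O(n^3\, 2^k\, (3k\,2^{8k})^{2k})$ bound are consistent with, and merely flesh out, the paper's argument.
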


\section{Concluding Remarks}

This paper introduces the {\sc 2-club Cluster Vertex Splitting} (2CCVS) and {\sc 2-club Cluster Edge Deletion with Vertex Splitting} (2CCEDVS) problems. Both problems are shown to be $\NP$-complete in general and that they remain $\NP$-complete on planar graphs of maximum degree four and three (respectively).
We further considered the polynomial-time approximability of the two problems and showed them to be $\APX$-hard. We believe a constant-factor approximation for {\sc 2CCVS} is not too difficult to obtain. In fact, this remains an interesting open problem in both cases.

On the positive side, we showed that both {\sc 2CCVS} and {\sc 2CCEDVS} are fixed-parameter tractable when parameterized by the number of allowed modifications. We believe that obtaining an improved algorithm that runs in $O^*(c^k)$ for {\sc 2CCVS} is not too difficult since a simple branching algorithm would have two cases for each length-three ``obstruction path'' and the rest consists of performing vertex splits only. However, in the case of {\sc 2CCEDVS}, obtaining an algorithm with a running time in $O^*(c^k)$ seems much more challenging, and we pose it here as an open problem.
Furthermore, we gave polynomial-time algorithms for both {\sc 2CCVS} and {\sc 2CCEDVS} when the input is restricted to trees. 
This suggests considering the parameterized complexity of the two problems when parameterized by treewidth.

Whether the obtained hardness results for 2-clubs hold also for $s$-clubs (i.e. for any/all $s\geq 3$) remains an open question, to be explored in future work. Other directions that could be explored include the parameterized complexity of the problems with respect to other parameters such as the treewidth of the graph, as well as using additional local parameters such as the number of times a vertex can split, which seems to be a realistic constraint. In fact, the use of local parameters seems to be an effective approach in the context of cluster editing \cite{barr2019combinatorial,komusiewicz2012cluster}.
Finally, an interesting open problem is whether the two considered problems admit polynomial-size kernels.

\end{document}